\documentclass[11pt,letter]{article}
\usepackage{multirow}
\usepackage{amssymb}
\usepackage{amsmath}
\usepackage{amsthm}
\usepackage{thm-restate}

\usepackage[pdftex, plainpages = false, pdfpagelabels, 
                 bookmarks=false,
                 bookmarksopen = true,
                 bookmarksnumbered = true,
                 breaklinks = true,
                 linktocpage,
                 pagebackref,
                 colorlinks = true,  
                 linkcolor = blue,
                 urlcolor  = blue,
                 citecolor = red,
                 anchorcolor = green,
                 hyperindex = true,
                 hypertexnames = false,
                 hyperfigures
                 ]{hyperref}
\usepackage{complexity} 
\usepackage{enumitem}
\setlist{nosep}

\usepackage{color}
\usepackage{tikz}

\usetikzlibrary{arrows,shapes,positioning,shadows,trees}

\usepackage{vmargin}
\setmarginsrb{1in}{1in}{1in}{1in}{0pt}{0pt}{0pt}{6mm}

\newcommand{\OO}{\mathcal{O}}

\newtheorem{lemma}{Lemma}

\newtheorem{claim}{Claim}[section]
\newtheorem{observation}[lemma]{Observation}
\newtheorem{proposition}{Proposition}[section]
\newtheorem{reduction rule}{Reduction Rule}[section]
\newtheorem{marking-scheme}{Marking Scheme}[section]


\newcommand{\defproblemout}[3]{
  \vspace{1mm}
\noindent\fbox{
  \begin{minipage}{0.96\textwidth}
  \begin{tabular*}{\textwidth}{@{\extracolsep{\fill}}lr} #1 \\ \end{tabular*}
  {\bf{Input:}} #2  \\
  {\bf{Output:}} #3
  \end{minipage}
  }
  \vspace{1mm}
}
\usepackage{algorithmic}
\usepackage{algorithm}
\usepackage[utf8]{inputenc}
\usepackage[LGR,T1]{fontenc} 
\usepackage[T1]{fontenc}
\usepackage{amsmath}

\newcommand{\oct}{\textnormal{\textsc{Odd Cycle Transversal}}}
\newcommand{\OCT}{\textnormal{\textsc{OCT}}}
\newcommand{\maxbipfull}{\textnormal{\textsc{Odd Cycle Transversal}}}

\newcommand{\w}{\textnormal{\texttt{w}}}

\newcommand{\blob}{\textnormal{\texttt{blob}}}

\title{Odd Cycle Transversal on $P_5$-free Graphs in Polynomial Time\footnote{Funding acknowledgements: Agrawal is supported by SERB Startup Research Grant, no. SRG/2022/000962; Lima is supported by the Independent Research Fund Denmark grant agreement number 2098-00012B; Lokshtanov is supported by NSF grant CCF-2008838; Rzążewski is supported by the European Research Council (ERC) under the European Union’s Horizon 2020 research and innovation programme grant agreement number 948057; Saurabh is supported by the European Research Council (ERC) under the European Union’s Horizon 2020 research and innovation programme grant agreement number 819416, and by a Swarnajayanti Fellowship (No. DST/SJF/MSA01/2017-18).}}

\author{Akanksha Agrawal\thanks{Indian Institute of Technology Madras, Chennai, India, \href{mailto:akanksha@cse.iitm.ac.in}{akanksha@cse.iitm.ac.in}.}
\and Paloma T. Lima\thanks{IT University of Copenhagen, Copenhagen, Denmark, \href{mailto:palt@itu.dk}{palt@itu.dk}.}
\and Daniel Lokshtanov\thanks{University of California, Santa Barbara, USA \href{mailto:daniello@ucsb.edu}{daniello@ucsb.edu}.}
\and Pawe\l{} Rz\k{a}\.{z}ewski\thanks{
Warsaw University of Technology, Warsaw, Poland and 
Institute of Informatics, University of Warsaw, Warsaw, Poland \href{mailto:pawel.rzazewski@pw.edu.pl}{pawel.rzazewski@pw.edu.pl}.}
\and Saket Saurabh\thanks{Institute of Mathematical Sciences, Chennai, India and University of Bergen, Norway, \href{mailto:saket@imsc.res.in}{saket@imsc.res.in}.}
\and Roohani Sharma\thanks{\href{mailto:roohani.sharma90@gmail.com}{roohani.sharma90@gmail.com}.}
}

\date{}

\begin{document}
\maketitle

\begin{abstract}
An \emph{independent set} in a graph $G$ is a set of pairwise non-adjacent vertices. A graph $G$ is \emph{bipartite} if its vertex set can be partitioned into two independent sets. In the {\sc Odd Cycle Transversal} problem, the input is a graph $G$ along with a weight function $\w$ associating a rational weight with each vertex, and the task is to find a smallest weight vertex subset $S$ in $G$ such that $G - S$ is bipartite; the weight of $S$, $\w(S) = \sum_{v\in S}\w(v)$. 
We show that {\sc Odd Cycle Transversal} is polynomial-time solvable on graphs excluding $P_5$ (a path on five vertices) as an induced subgraph. 
The problem was previously known to be polynomial-time solvable on $P_4$-free graphs and \textsf{NP}-hard on $P_6$-free graphs~[Dabrowski, Feghali, Johnson, Paesani, Paulusma and Rzążewski, Algorithmica 2020]. Bonamy, Dabrowski, Feghali, Johnson and Paulusma~[Algorithmica 2019] posed the existence of a polynomial-time algorithm on $P_5$-free graphs as an open problem, this was later re-stated by {Rz\k{a}\.{z}ewski} [Dagstuhl Reports, 9(6): 2019] and by Chudnovsky, King, Pilipczuk, {Rz\k{a}\.{z}ewski}, and  Spirkl  [SIDMA 2021], who gave an algorithm with running time $n^{O(\sqrt{n})}$. 
\end{abstract}

\section{Introduction}\label{sec:intro}
In a vertex deletion problem, the input is a graph $G$ along with a weight function $\w: V(G) \rightarrow \mathbb{Q}$, and the task is to find a smallest weight vertex subset $S$ such that removing $S$ from $G$ results in a graph that belongs to a certain target class of graphs; the weight of $S$ being defined as $\w(S) = \sum_{v\in S} \w(v)$. By varying the target graph class we can obtain many classic graph problems, such as deletion to edge-less graphs ({\sc Vertex Cover}), deletion to acyclic graphs ({\sc Feedback Vertex Set}), deletion to bipartite graphs ({\sc Odd Cycle Transversal}), or deletion to planar graphs ({\sc Planarization}). 
With the exception of the class of all graphs, for every target class that contains an infinite set of graphs and is closed under vertex deletion, the vertex deletion problem to that graph class is \textsf{NP}-hard~\cite{LewisYannakakis}. For this reason a substantial research effort has been dedicated to understanding the computational complexity of various vertex deletion problems when the input graph $G$ is required to belong to a restricted graph class as well (see~\cite{brandstadt1999graph} and the companion website~\cite{GCWebpage}).

In this paper, we consider the {\sc Odd Cycle Transversal} ({\sc OCT}) problem, that is the vertex deletion problem to {\em bipartite} graphs. A vertex set $S$ is {\em independent} if no edge has both its endpoints in $S$ and a graph $G$ is bipartite if its vertex set can be partitioned into two independent sets. A graph is bipartite if and only if it has no odd cycles~\cite{diestelBook}. The {\sc OCT} problem is very well studied, and has been considered from the perspective of approximation~\cite{DBLP:journals/combinatorica/GoemansW98,DBLP:journals/mp/FioriniHRV07,DBLP:journals/siamdm/KralSS12}, heuristics~\cite{DBLP:journals/jgaa/Huffner09,DBLP:journals/tcs/AkibaI16}, exact exponential time~\cite{DBLP:journals/mst/RamanSS07,DBLP:journals/corr/abs-1807-10277,DBLP:journals/tcs/KhotR02} and parameterized algorithms~\cite{DBLP:journals/talg/KratschW14,DBLP:journals/talg/LokshtanovNRRS14,DBLP:journals/orl/ReedSV04,DBLP:conf/iwoca/LokshtanovSS09,DBLP:conf/fsttcs/LokshtanovSW12,DBLP:conf/iwpec/JansenK11,DBLP:journals/jcss/KolayMRS20,DBLP:conf/iwpec/JacobBDP21,DBLP:conf/soda/IwataOY14,DBLP:conf/soda/KawarabayashiR10}.

From the viewpoint of restricting the input to specific classes of graphs, \OCT~is known to be polynomia-time solvable on permutation graphs~\cite{BrandstadtKratsch1985} and more generally on graphs of bounded mim-width~\cite{Buixuan2013}.
More recently, $H$-free graphs, that is, graph classes defined by one forbidden induced subgraph, received particular attention. 
Chiarelli et al.~\cite{Chiarellietal2018} showed that \OCT~is \textsf{NP}-complete on graphs of small fixed girth and line graphs. 
This implies that if $H$ contains a cycle or a claw, then \OCT~is \textsf{NP}-hard on $H$-free graphs.
Hence, we can restrict ourselves to the case in which $H$ is a linear forest (i.e., each connected component of $H$ is a path). 
For $P_4$-free graphs a polynomial-time algorithm follows directly from the algorithm of Brandst\"{a}dt and Kratsch~\cite{BrandstadtKratsch1985} for permutation graphs. 
Bonamy et al.~\cite{BonamyDFJP19} posed the question of the existence of a polynomial-time algorithm for \OCT\ on $P_k$-free graphs, for all $k \geq 5$. 
Subsequently, Okrasa and Rz\k{a}\.{z}ewski~\cite{OkrasaRzazewski2020} showed that \OCT~is \textsf{NP}-hard on $P_{13}$-free graphs, 
and shortly thereafter Dabrowski et al.~\cite{Dabrowskietal2020} proved that the problem remains \textsf{NP}-hard even on $(P_2+P_5,P_6)$-free graphs. 
On the other hand, \OCT~is known to be polynomial-time solvable on $sP_2$-free graphs~\cite{Chiarellietal2018} and on $(sP_1+P_3)$-free graphs~\cite{Dabrowskietal2020}, for every constant $s \geq 1$.
Thus, prior to our work, the only {\em connected} graph $H$ such that the complexity status of \OCT~on $H$-free graphs remained unknown was the $P_5$.
For this reason, resolving the complexity status of {\sc OCT} on $P_5$-free graphs was posed as an open problem by Rz\k{a}\.{z}ewski~\cite{Dagstuhl2019}, and by Chudnovsky et al.~\cite{Chudnovskyetal2021}, who gave an algorithm for {\sc OCT} on $n$-vertex $P_5$-free graphs with running time $n^{O(\sqrt{n})}$. 
In this paper, we resolve the open case regarding the computational complexity of {\sc OCT} on $P_5$-free graphs. Specifically, we prove the following theorem.

\begin{restatable}{theorem}{mainthm}\label{thm:oct}
\oct\ on $P_5$-free graphs is polynomial-time solvable.
\end{restatable}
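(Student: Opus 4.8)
I would work throughout with the equivalent optimisation problem of finding a maximum-weight induced bipartite subgraph: a minimum-weight vertex set $S$ with $G-S$ bipartite is precisely the complement of a maximum-weight set $A\cup B$, where $A$ and $B$ are disjoint independent sets (the colour classes of a proper $2$-colouring of $G-S$), and conversely any two disjoint independent sets induce a bipartite subgraph. Since this optimum is additive over the connected components of $G$, we may assume $G$ is connected. The structural fact about $P_5$-free graphs that I would use at the level of solutions is this: every connected bipartite $P_5$-free graph with bipartition $(X,Y)$ has a vertex of $X$ adjacent to all of $Y$, or a vertex of $Y$ adjacent to all of $X$. It follows from a short induced-$P_5$ argument: pick a maximum-degree vertex on each side; if neither is complete to the opposite side, then a shortest (hence induced) path of length three from the chosen vertex $x\in X$ to one of its non-neighbours, together with any neighbour of $x$ missed by the middle vertex of that path, induces a $P_5$. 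Hence every connected component of an optimal solution $G[A\cup B]$ has a ``local dominator''. On the algorithmic side, the workhorse is the polynomial-time algorithm of Lokshtanov, Vatshelle and Villanger for \textsc{Maximum Weight Independent Set} on $P_5$-free graphs.

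The algorithm I have in mind is a recursion on a list variant of the problem: given $G$, two vertex sets $L_1,L_2$ (the vertices still admissible to the first, resp.\ second colour class), and weights, find disjoint independent sets $A\subseteq L_1$, $B\subseteq L_2$ of maximum weight. By the classical theorem of Bacs\'o and Tuza a connected $P_5$-free graph has a dominating clique (or a dominating set of at most three vertices, which we handle by brute force), and a clique meets any induced bipartite subgraph in at most two vertices; so we branch over the at most two clique vertices retained in $A\cup B$ and over their colours. Each branch deletes all common neighbours of the retained clique vertices, fixes the colours of their private neighbours, and removes the remaining clique vertices from both lists, yielding a bounded number of smaller list-instances. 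When $L_1=L_2=\emptyset$, or when the two lists are disjoint and one of them is empty, the instance collapses to a single \textsc{Maximum Weight Independent Set} call, which terminates the recursion.

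The main obstacle, where I expect essentially all of the difficulty to lie, is showing that this recursion runs in polynomial time. The naive recursion tree is exponential, and, in contrast to plain \textsc{Independent Set}, one cannot afford to ``guess a dominator for each component of the solution'', since an optimal solution of a connected $P_5$-free graph can have linearly many components (for instance, a universal vertex joined to many disjoint copies of $P_3$, whose optimum deletes the universal vertex). The remedy I would develop is a dichotomy between \emph{big} pieces of the solution and small \emph{blob}s: the induced-$P_5$ argument forces any vertex outside the solution, or in another component, that sees several components to actually see a large and rigidly structured part of each of them, which simultaneously bounds the number of big pieces by a polynomial — so that a polynomial-size certificate can specify them together with their $2$-colourings and their dominators — and makes the small blobs decouple from one another and from the big pieces once the latter are fixed. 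One then amortises the recursion against a potential combining the number of undecided vertices with the number of big pieces still to be revealed. Making this potential, and the accompanying analysis of how induced $P_5$'s are avoided, go through cleanly is the crux of the argument.

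Finally, I would put the pieces together as follows: enumerate the polynomially many certificates describing the big pieces; for each certificate, form the residual list-instance on the still-uncoloured, still-$P_5$-free part of $G$ and solve it by the list recursion above, bottoming out in \textsc{Maximum Weight Independent Set} computations on $P_5$-free induced subgraphs; and output the best $A\cup B$ found, whose complement is the sought odd cycle transversal. Correctness is the statement that some certificate matches the big pieces of an optimal solution, which is exactly the structural dichotomy; the running time is polynomial because there are polynomially many certificates, each list recursion tree has polynomial size by the potential argument, and each leaf is handled in polynomial time by the Lokshtanov--Vatshelle--Villanger algorithm.
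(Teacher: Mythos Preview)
Your plan has a genuine gap precisely where you say the difficulty lies: you never establish that the recursion runs in polynomial time. The ``big pieces versus small blobs'' dichotomy and the accompanying potential are asserted but not defined; the claim that a vertex seeing several solution components must see ``a large and rigidly structured part of each of them'' is not made precise, and nothing in the outline explains why the number of big pieces is polynomially bounded or why the small blobs decouple once the big pieces are fixed. Without this, the branching you describe (guess which $\le 2$ vertices of a dominating clique survive, guess their colours, delete the rest of the clique, recurse) has no polynomial bound: after one step you are left with another connected $P_5$-free list instance of essentially the same size, whose new dominating clique need have nothing to do with the old one, and the recursion tree can be exponential. Your local-dominator observation about connected bipartite $P_5$-free graphs is correct, but it does not by itself control the number of components of an optimal solution or the interaction between them, which is exactly the obstacle you point out with the universal-vertex-plus-many-$P_3$'s example.

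The paper avoids recursion on the input altogether. Its key structural step is this: fix a dominating $P_2$ or $P_3$, call it $D$, of a component $C$ of an optimal solution; after deleting the common neighbours of the two sides of $D$, every connected component of $G-N[D]$ that lies in $V(G)\setminus N[C]$ is a \emph{module} of $G$ (a short induced-$P_5$ argument), so the non-module components of $G-N[D]$ are contained in $N(C)$ and can be deleted. After this cleanup one shows that $G[V(C)\cup (N(C)\setminus N(D))]$ has a dominating set $\widetilde D=D\cup D'$ with $|D'|\le 3$ and, crucially, $N[\widetilde D]=N[C]$. Thus a choice of at most six vertices pins down $N[C]$ exactly; two MWIS calls on $G[N(D_L)]$ and $G[N(D_R)]$ then produce a replacement for $C$. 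Enumerating all $O(n^6)$ choices of $(D,D')$ yields a polynomial family $\mathcal C$ of candidate components, and the packing of candidates is a single MWIS instance on the blob graph of $G$ restricted to $\mathcal C$, which is again $P_5$-free. So the paper trades your unbounded branching and unspecified potential for one structural lemma (modules outside $N[C]$, and the six-vertex $\widetilde D$) followed by polynomially many black-box MWIS calls; this is the missing idea in your proposal.
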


Note that \OCT\ problem can also be rephrased as the problem of finding a maximum weight bipartite (induced) subgraph in the input. The proof of Theorem~\ref{thm:oct} has two main steps as described below.

\noindent{\bf Small covering family for a solution.}
Our main technical contribution towards the proof of Theorem~\ref{thm:oct} is Lemma~\ref{lem:main}. We show that if $G$ is a $P_5$-free graph, then in polynomial time one can construct an $\mathcal{O}(n^6)$-sized family $\mathcal{C}$ of bipartite sets of $G$, such that there exists a maximum weight bipartite subgraph of $G$ which is obtained by taking the induced packing of some of the sets in the family $\mathcal{C}$. Here induced packing means that no two sets intersect, or have an edge between them.

\begin{restatable}{lemma}{mainlemma}\label{lem:main}
    Given a $P_5$-free graph $G$ and a weight function $\w : V(G) \to \mathbb{Q}$, there exists a polynomial-time algorithm that outputs a collection $\mathcal{C} \subseteq 2^{V(G)}$
    of size $\OO(n^6)$ such that:
    \begin{enumerate}
        \item for each $C \in \mathcal{C}$, $G[C]$ is bipartite, and
        \item there exists a set $S \subseteq V(G)$ such that $G[S]$ is bipartite and $\w(S)$ is maximum such that $S = \bigcup_{C \in \mathcal{C}'} C$, where $\mathcal{C}' \subseteq \mathcal{C}$ and for each $C_1, C_2 \in \mathcal{C}'$, $C_1 \cap C_2 =\emptyset$, and $E(C_1,C_2)=\emptyset$.
    \end{enumerate}
\end{restatable}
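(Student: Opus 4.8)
\emph{Reduction to connected components.}
It suffices to produce, in polynomial time, an $\OO(n^6)$‑sized family of bipartite induced subgraphs of $G$ that contains every connected component of $G[S]$ for \emph{some} maximum‑weight bipartite set $S$: given such a family and such an $S$, taking $\mathcal{C}'$ to be the set of connected components of $G[S]$ witnesses property~2, since distinct components of $G[S]$ are vertex‑disjoint and pairwise non‑adjacent. So fix a maximum‑weight bipartite $S\subseteq V(G)$ (we will be free to replace it by another optimal solution later) and a connected component $C$ of $G[S]$, with bipartition $(X,Y)$; the graph $G[C]$ is connected, bipartite, and $P_5$‑free.

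\emph{Sandwiching $C$ between two explicit sets.}
By the classical theorem of Bacs\'{o} and Tuza, every connected $P_5$‑free graph has a dominating clique or a dominating induced $P_3$; since a clique in a bipartite graph has at most two vertices, there is $D\subseteq C$ with $|D|\le 3$ and $G[D]\in\{K_1,K_2,P_3\}$ such that $C\subseteq N_G[D]$. Because $G[C]$ is bipartite, a vertex $u\in C\setminus D$ is non‑adjacent to all vertices of $D$ on its own side and adjacent to at least one on the other side, so the side of $u$ is dictated by its adjacency pattern to $D$; moreover, a vertex adjacent to both endpoints of an edge of $D$ would close a triangle with $D$ and hence cannot lie in $S$ at all. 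Consequently, writing $D_X=D\cap X$ and $D_Y=D\cap Y$, from $D$ alone one computes two disjoint sets $P_X,P_Y\subseteq N_G(D)$ — the vertices whose $D$‑pattern forces them onto the $X$‑, resp.\ $Y$‑side — such that $D_X\subseteq X\subseteq D_X\cup P_X$ and $D_Y\subseteq Y\subseteq D_Y\cup P_Y$, with no edges between $P_X$ and $D_X$ and none between $P_Y$ and $D_Y$; in particular, for every independent $A\subseteq P_X$ and independent $B\subseteq P_Y$ the set $D\cup A\cup B$ induces a bipartite graph (with parts $D_X\cup A$ and $D_Y\cup B$).

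\emph{Completing $C$ from a bounded seed.}
If connected components could be chosen independently of one another, then $X$ and $Y$ would simply be maximum‑weight independent sets of the $P_5$‑free graphs $G[P_X]$ and $G[P_Y]$ — computable in polynomial time, e.g.\ by the known polynomial‑time algorithm for \textsc{Maximum Weight Independent Set} on $P_5$‑free graphs together with a fixed tie‑breaking rule to make the choice canonical — and $\mathcal{C}$ would already have size $\OO(n^3)$. The obstruction is that the candidate regions $P_X,P_Y$ of different dominating structures overlap, so a globally optimal $S$ may place in $C$ only a \emph{proper} independent subset of $P_X$, the remainder being ``claimed'' by other components. The plan is to absorb this interaction into a bounded amount of extra guessing: via a weight‑non‑decreasing exchange (uncrossing) argument one chooses $S$ so that, in each component, $X$ is the canonical maximum‑weight independent set of $G[P_X']$, where $P_X'$ is obtained from $P_X$ by deleting the closed neighbourhood of an $\OO(1)$‑sized set $R\subseteq S$ that ``witnesses'' the competing structures truncating $C$, and symmetrically for $Y$. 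Then $C$ is the output of a deterministic polynomial‑time completion of the seed consisting of $D$ together with $R$ — at most six vertices in all — and letting $\mathcal{C}$ be the collection of all such completions over all $\OO(n^6)$ seeds gives $|\mathcal{C}|=\OO(n^6)$; property~1 is immediate from the previous paragraph and property~2 from the exchange argument.

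\emph{The main obstacle.}
The heart of the proof is this exchange argument: showing that a maximum‑weight bipartite set can be normalised so that each connected component is recovered by the fixed completion rule from only a constant number of seed vertices. This needs a case analysis over $G[D]\in\{K_1,K_2,P_3\}$, and within each case a careful use of $P_5$‑freeness to establish (i) that a component's interaction with the remaining components — which vertices of $P_X,P_Y$ it must cede — is governed by $\OO(1)$ vertices, and (ii) that replacing $X$ by a maximum‑weight independent set of the appropriate $P_5$‑free subgraph neither creates an odd cycle nor merges $C$ with a neighbouring component without an improving swap elsewhere, so that the normalisation terminates at an optimum. Making the completion canonical, so that the independent maximum‑weight choices taken in different components remain mutually compatible, and carrying everything out with rational weights, is where essentially all of the technical effort lies.
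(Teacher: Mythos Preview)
Your outline is accurate through the second paragraph: the dominating $D$ with $|D|\le 3$, the split into $P_X,P_Y$, and the observation that any pair of independent sets drawn from them yields a bipartite graph are all correct and are exactly how the paper begins. But from the third paragraph onward you are describing a plan, not a proof, and the missing step is precisely the one that carries all the weight. You assert that an exchange argument lets one normalise $S$ so that each side of each component is the canonical maximum‑weight independent set of $G[P_X\setminus N[R]]$ for some $R\subseteq S$ of constant size, yet you give no construction of $R$, no reason the interaction with the other components is controlled by $O(1)$ vertices, and no argument that the resulting swap does not create edges to the rest of $S$. Your ``main obstacle'' paragraph essentially lists what would have to be verified; none of it is verified.

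The paper does \emph{not} proceed by such an exchange/normalisation argument. Its key idea, which your proposal does not contain, is to compute $N[C]$ exactly from a seed of at most six vertices. After guessing $D$ and deleting $N(D_L)\cap N(D_R)$, the paper shows two structural facts about $X:=N(C)\setminus N(D)$ and $Y:=V(G)\setminus N[C]$: (i) $E(X,Y)=\emptyset$, and (ii) every connected component of $G[Y]$ is a \emph{module} in $G$. Hence any connected component of $G-N[D]$ that fails to be a module must lie inside $X\subseteq N(C)$ and can be deleted without touching $S$. After this module‑based cleanup, a second application of the Bacs\'o--Tuza theorem to the connected $P_5$‑free graph $G[R\cup V(C)]$ (where $R$ is what remains of $X$) produces three further vertices $D'$ with $N[D\cup D']=N[C]$. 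Knowing $N[C]$ exactly, one deletes the boundary vertices of $N[C]$ so that it becomes an isolated connected component of the working graph; \emph{only then} does one take maximum‑weight independent sets inside $N(D_R)$ and $N(D_L)$, and the resulting replacement $C_{D,D'}$ automatically has no edges to $S\setminus V(C)$. This completely sidesteps the ``compatibility between components'' issue you flag, and it is the module argument, not any uncrossing, that makes it work.
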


 The proof of Lemma~\ref{lem:main} heavily relies on the structure of $P_5$-free graphs, including the classical result that every connected $P_5$-free graph has a dominating $P_3$ or dominating clique~\cite{bacso1990dominating}, as well as a somewhat surprising application of the concept of modules~\cite{gallai1967transitiv} (a module in a graph $G$ is a set $M$ of vertices such that every vertex outside of $M$ either is adjacent to all of $M$ or to none of $M$). The formal proof of Lemma~\ref{lem:main} appears in Section~\ref{sec:key}.

\smallskip

\noindent{\bf Reduction to maximum weight independent set in $P_5$-free graphs.} Equipped with Lemma~\ref{lem:main}, we are now looking for a maximum weight bipartite subgraph, each of whose connected components is contained in a given polynomial-sized family $\mathcal{C}$ of connected, bipartite sets. One can now reduce the problem of finding a maximum weight bipartite subgraph in the input graph $G$ to the problem of finding a maximum weight {\em independent set} in an auxiliary graph defined as follows: there is a vertex for each set in $\mathcal{C}$,  an edge between a pair of vertices if and only if the corresponding sets are not disjoint or have an edge connecting them, and the weight of a vertex is the sum of the weights of the vertices in the set that it corresponds to. 
As observed in~\cite{Gartlandetal2021}, if $G$ is $P_5$-free, then so is this auxiliary graph. Therefore, the problem actually reduces to the problem of finding a maximum weight independent set in $P_5$-free graphs, which has been shown to be polynomial-time solvable by Lokshtanov, Vatshelle, and Villanger~\cite{DBLP:conf/soda/LokshantovVV14}. The formal details of this step appears in Section~\ref{sec:algo}.

\section{Preliminaries}\label{sec:prelims}
We denote the set of natural numbers and the set of rational numbers by $\mathbb{N}=\{1,2, \cdots\}$
and $\mathbb{Q}$, respectively, and let $\mathbb{Q}_{\geq 0} = \{x\in \mathbb{Q}\mid x\geq 0\}$.
For $c\in \mathbb{N}$, $[c]$ denotes the set $\{1, \cdots, c\}$.
For a set $X$, we denote by $2^X$ the collection of all subsets of $X$, by ${X \choose i}$ the collection of all $i$-sized subsets of $X$, by ${X \choose \leq i}$ all subsets of $X$ of size at most $i$, and by ${X \choose j_1 \leq i \leq j_2}$ the collection of all $i$-sized subsets of $X$ where $j_1 \leq i \leq j_2$.

Consider a graph $G$. 
For $v \in V(G)$, $N_G(v)$ denotes the set of neighbors of $v$ in $G$, and $N_G[v] = N_G(v) \cup \{v\}$. 
For $X\subseteq V(G)$, $N_G(X) = (\bigcup_{x\in X} N_G(x)) \setminus X$ and $N_G[X] = N_G(X)\cup X$. 
For a subgraph $H$ of $G$, we sometimes write $N_G(H)$ (resp.~$N_G[H]$) as a shorthand for $N_G(V(H))$ (resp.~$N_G[V(H)]$). 
For any $X,Y \subseteq V(G)$, $E_G(X,Y)$ denotes the set of edges of $G$ with one endpoint in $X$ and the other in $Y$. 
Whenever the graph $G$ is clear from the context, we drop the subscript $G$ from the above notations. For any $X \subseteq V(G)$, $G[X]$ denotes the graph induced by $X$, i.e., $V(G[X]) = X$ and $E(G[X])=\{(u,v) \in E(G) \mid u,v \in X\}$. Moreover, by $G-X$ we denote the graph $G[V(G)\setminus X]$. 
For any $v \in V(G)$, we use $G-v$ as a shorthand for $G-\{v\}$. For two graphs $G_1$ and $G_2$ by $G_1\cup G_2$ we denote the graph with vertex set $V(G_1) \cup V(G_2)$ and edge set $E(G_1) \cup E(G_2)$.

For any $i \in \mathbb{N}$, $P_i$ is a path on $i$ vertices and 
a $P_i$-free graph is a graph that has no induced subgraph isomorphic to $P_i$. A {\em dominating set} in a graph $G$ is a set of vertices $D\subseteq V(G)$ such that $N[D] = V(G)$.

\begin{proposition}[Theorem~$8$, \cite{bacso1990dominating}]\label{prop:dom}
Every connected $P_5$-free graph $G$ has a dominating set $D$ such that $G[D]$ is either a $P_3$ or a clique.
\end{proposition}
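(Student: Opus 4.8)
The plan is to prove this classical theorem of Bacsó and Tuza by an extremal argument on maximal cliques, exploiting $P_5$-freeness both to control distances and to route forbidden induced paths. First I would handle the trivial case: if $G$ has a dominating clique we are done, so assume it does not and aim to produce a dominating $P_3$. Throughout I will use the elementary fact that a shortest path between two vertices is induced, so in a $P_5$-free graph every shortest path has at most four vertices; in particular every BFS has at most four layers and $\mathrm{dist}(x,y)\le 3$ for all $x,y$. I will also use repeatedly that, since $K$ below is a \emph{maximal} clique, no vertex outside $K$ is adjacent to all of $K$.

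Fix a maximal clique $K$ of $G$ maximizing $|N[K]|$ (equivalently, minimizing the number of vertices it fails to dominate), and suppose for contradiction that $R:=V(G)\setminus N[K]$ is nonempty; note $R$ is anticomplete to $K$. Since $G$ is connected, the boundary $B:=N(K)\cap N(R)$ is nonempty. Step~A: every $r\in R$ is at distance exactly $2$ from $K$. Indeed, if some $r$ were at distance $3$ from a vertex $k\in K$, take a shortest path $r-c-b-k$; then $b\in N(K)$ (so $b\notin K$), hence maximality of $K$ gives $k'\in K$ with $b\not\sim k'$, and since the path is shortest one checks that $r,c$ are non-adjacent to $k$ and to $k'$ and that $r\not\sim b$, so $r-c-b-k-k'$ is an induced $P_5$ --- a contradiction. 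Hence $V(G)=K\sqcup N(K)\sqcup R$, with $R$ attached to the rest of $G$ only through $B$.

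Step~B: a single vertex $b^\star\in B$ dominates all of $R$. If not, pick distinct $b,b'\in B$ and $r,r'\in R$ with $b\sim r$, $b'\sim r'$ and $b\not\sim r'$. Using a common neighbour of $b$ and $b'$ in $K$ if one exists, and otherwise an edge $k_1k_2$ of $K$ with $k_1\in N(b)$ and $k_2\in N(b')$, one tries to route an induced $P_5$ of the form $r-b-\cdots-b'-r'$ through $K$; blocking every such path forces so many extra adjacencies among $R$ and $B$ that the clique $K$ can instead be enlarged or replaced by one dominating strictly more, contradicting the choice of $K$. Step~C: with $b^\star$ dominating $R$, let $k^\star\in N(b^\star)\cap K$ and let $k'\in K$ be a non-neighbour of $b^\star$ (which exists by maximality). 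Extending $\{b^\star\}\cup(N(b^\star)\cap K)$ to a maximal clique $\hat K$, one shows $\hat K$ already dominates $K$ and all of $R$; either $|N[\hat K]|>|N[K]|$, contradicting the choice of $K$, or we are in a rigid residual configuration (morally, $K$ together with $B$ and $R$ forming a $C_5$-like pattern) in which an explicit three-vertex set built from $b^\star$ and two vertices of $K$ --- such as $\{k',k^\star,b^\star\}$, which induces a $P_3$ since $k'\sim k^\star\sim b^\star$ and $k'\not\sim b^\star$ --- dominates $V(G)$; its domination is checked layer by layer ($K$ via $k^\star$, $R$ via $b^\star$, $N(K)$ using the constraints of the residual configuration).

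The main obstacle is the bookkeeping in Steps~B and~C. There are several ways $B$ and $R$ can attach to $K$, and in each branch one must either produce the forbidden induced $P_5$ or verify that a better clique exists, and then, in whatever configuration survives, correctly identify and verify the dominating $P_3$. None of the individual verifications is hard; the difficulty is choosing the extremal objects (the clique $K$, and later $b^\star$) so that the improvement arguments actually close, and organizing the case split so that it is exhaustive and non-redundant.
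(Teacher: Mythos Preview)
The paper does not prove this proposition at all; it is quoted as Theorem~8 of Bacs\'o and Tuza and used as a black box. So there is no proof in the paper to compare your attempt against.

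As for the sketch itself: Step~A is essentially correct, modulo the phrasing --- you want ``distance $\ge 3$ from $K$'' rather than ``distance $3$ from a vertex $k\in K$'', so that the intermediate vertex $c$ on the shortest path is forced into $R$ and hence is non-adjacent to every vertex of $K$, in particular to $k'$. Steps~B and~C, however, are programmes rather than proofs, and you acknowledge this. In Step~B, the assertion that a single $b^\star\in B$ dominates all of $R$ does not follow from the configuration you set up without a real case analysis on the adjacencies among $\{r,r',b,b'\}$ and their attachment to $K$; in several branches the induced $P_5$ you want requires a careful choice of which vertices of $K$ to use, and in others the ``improvement'' clique is not the one you describe. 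In Step~C, the clique $\hat K$ you build contains $b^\star$ and $N(b^\star)\cap K$, so it dominates $K\cup R$, but it need not dominate all of $N(K)$; hence $|N[\hat K]|>|N[K]|$ is not automatic, and the ``rigid residual configuration'' you allude to must be pinned down precisely before the candidate $\{k',k^\star,b^\star\}$ can be verified to dominate $N(K)$. These are exactly the places where the Bacs\'o--Tuza argument does its real work, so what you have is a reasonable outline of the standard extremal-clique strategy, not yet a proof.
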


As noted earlier, the \OCT\ problem can be restated as a maximization problem. We define it directly as a maximization problem below.

\defproblemout{\maxbipfull\ (\OCT)}{An undirected graph $G$ on $n$ vertices and a weight function $\w: V(G) \to \mathbb{Q}$.}{Find $S \subseteq V(G)$, such that $G[S]$ is bipartite and $\w(S)=\sum_{v \in S} \w(v)$ is maximized.}

A set $S \subseteq V(G)$ such that $G[S]$ is bipartite, is called a {\em solution} of $G$. If $\w(S)$ is maximum then we call it an {\em optimal solution} of $(G,\w)$. We will assume that the weight of each vertex is positive, as otherwise, we can remove the vertices with non-positive weight without changing the optimal solution.

\section{Finding a small covering family of the connected components of a solution}\label{sec:key}

The goal of this section is to prove Lemma~\ref{lem:main}.

\mainlemma*

At the heart of our proof for the above lemma lies a property that we prove: for any $S\subseteq V(G)$ where $G[S]$ is a bipartite graph and $C$ is the connected component of $G[S]$,
there is a small set $\widetilde{D} \subseteq V(G)$, which can be guessed efficiently, such that after doing some appropriate cleaning operation of the graph, 
$N[\widetilde{D}] = N[C]$. 
With such a set $\widetilde{D}$ at hand,
we find a replacement for $C$ in $S$ and put this replacement in the family $\mathcal{C}$: by a replacement for $C$ we mean a set $C'$ such that $S \setminus C \cup C'$ is also a bipartite set of weight as large as $S$.
Such a replacement $C'$ is found by exploiting the known algorithm for computing independent sets on $P_5$-free graphs. We will now intuitively explain the steps of our algorithm, and we give a concrete pseudo-code for it in Algorithm~\ref{alg:key}.

In the following, we denote the input graph by $G'$ (instead of $G$). 
Consider a set $S \subseteq V(G')$
where $G'[S]$ is bipartite, and let $C$ be an arbitrary connected component of $G'[S]$.
Our goal is to compute a polynomial-sized family $\mathcal{C}$ of vertex subsets that either contains $C$ itself or a replacement for $C$.
 
To cover the trivial case when $C$ has exactly one vertex, we add the sets $\{v\}$ for each $v \in V(G)$ to the family $\mathcal{C}$ in Line~\ref{line:init}.
Hereafter, assume that $C$ has at least $2$ vertices. As $C$ is a connected, $P_5$-free and bipartite graph, from Proposition~\ref{prop:dom}, there must exist a dominating induced $P_3$ or a dominating $P_2$ for $C$; let one such $P_3$ or $P_2$ be $D$. Note that $V(C) \subseteq N[D]$. The loop at Line~\ref{line:for-one} of Algorithm~\ref{alg:key} will precisely be iterating over such potential $D$s. We will next explain our operations for this fixed $D$, and since we will delete some vertices from our graph, we initialize $G= G'$ at Line~\ref{line:init-graph-G}. We remark that, at all point of time we will implicitly maintain that $S \subseteq V(G)$, and thus maintain that $C$ is the connected component of $G[S]$.

\begin{figure}[t]
    \centering
    \includegraphics[width=.7\textwidth]{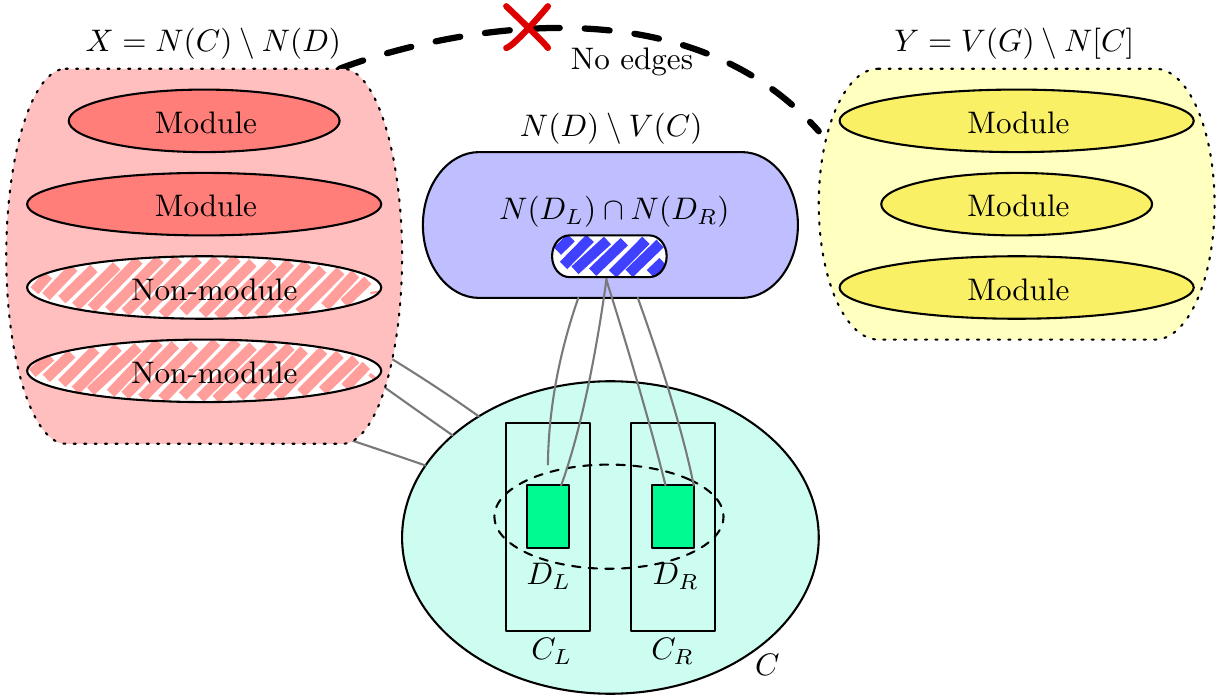}
    \caption{Illustration of various sets and connected components used in the algorithm. The striped ovals denote removal of the corresponding vertices by the algorithm.}
    \label{fig:lemmaStructure}
\end{figure}

We let $C_L\uplus C_R$ be the unique bipartition of $C$ and $D_L \uplus D_R$ be the bipartition of $D$ such that $D_L\subseteq C_L$ and $D_R\subseteq C_R$ (see Figure~\ref{fig:lemmaStructure}).\footnote{Note that both $C$ and $D$ are connected bipartite graphs. 
Thus, upto switching parts, there is exactly one valid bipartition of each of these graphs.} 
Recall that $C$ is the connected component of $G[S]$. 
Notice that any vertex $u \in N(D_L) \cap N(D_R)$ must lie outside of $S$.\footnote{Unless stated explicitly, inside the loop starting at Line~\ref{line:for-one}, we will only be concerned with the graph $G$, and thus, we omit the graph subscripts from the notations like $N(D_L)$.} 
Thus, Line~\ref{line:while-one}-\ref{line:while-one-end} removes such vertices from $G$.  

A set of vertices $A \subseteq V(G)$ is a {\em module} in $G$, if for every pair $x,y \in A$, $N(x) \setminus A = N(y) \setminus A$. Note that checking whether a set $A \subseteq V(G)$ is a module can be done in polynomial time.

We let $X$ be the neighbors of $C$ outside $N(D)$, i.e., $X = N(C) \setminus N(D)$. Also, we let $Y$ be the vertices outside of $C$ and its neighborhood, i.e., $Y = V(G) \setminus N[C]$. We will establish the following statements:
\begin{description}
\item {\em (In Claim~\ref{claim:no-across-edges})} There are no edges between a vertex in $X$ and a vertex in $Y$, i.e., $E(X,Y) = \emptyset$. 

\item {\em (In Claim~\ref{claim:module})} Each connected component of $G[Y]$ is a module in $G$.

\item {\em (In Claim~\ref{claim:ds-c-and-x}, stated roughly here)} the graph $G[X \cup V(C)]$ has a dominating set $\widetilde{D}$ such that: $D\subseteq \widetilde{D}$ and $|\widetilde{D}\setminus D| \leq 3$.
\end{description}

Recall our assumption that $V(C) \subseteq N[D]$.
Using the first property that we establish, we can obtain that for any connected component $Z$ of $G-N[D]$, either $V(Z) \subseteq X$ or $V(Z) \subseteq Y$. 
Now using the second property stated above, we can obtain that the vertices in a non-module connected component of $G-N[D]$ must belong to $X$, and thus it cannot contain a vertex from $S$. This leads us to Line~\ref{line:while-two}-\ref{line:while-two-end} of our algorithm, where we remove vertices from all such non-module connected components. We let $R \subseteq X$ be the vertices remaining after the above stated deletion of vertices from $X$. 

The third property will be used to completely identify the vertices of $N[C]$, however, we may not be able to precisely distinguish which vertices from $N[C]$ belong to $C$. To this end, we will iterate over the potential choices of $D'$ of size at most $3$, so that $\widetilde{D} = D \cup D'$ is a dominating set for $G[R \cup V(C)]$, at Line~\ref{line:for-three}. We will argue that for such a correct $\widetilde{D}$, we must have $N[\widetilde{D}] = N[C]$. Thus, knowing $\widetilde{D}$ precisely gives us $N[C]$, and at Line~\ref{line:supersets} we denote the respective set by $C^*_{D,D'} = N[\widetilde{D}] = N[C]$. 

Recall that $C$ is a connected biparitite induced subgraph of $G[S]$, and any vertex from $G[N[\widetilde{D}]] = G[N[C]]$ that has a neighbor from $V(G') \setminus N[\widetilde{D}]$ cannot belong to $S$ (and thus $C$). Thus, at Line~\ref{line:while-four}-\ref{line:while-four-end} we do the cleaning operation by removing such vertices from $G$. 

After this (in the graph resulting after the previous deletions), we  obtain that $N[C]$ is a connected component of $G$ (not necessarily bipartite) and $S \subseteq V(G)$. 
Now instead of finding $V(C)$ inside $N[C]$ exactly, we will find some $C'$ which will be as good as $V(C)$ as follows. We recall that $D$ is a connected bipartite dominating set for $C$ and thus (upto switching parts), $D$ has a unique bipartition $D_L \uplus D_R$, which we have fixed at Line~\ref{line:for-two}, and we have assumed that $D_L \subseteq C_L$ and $D_R \subseteq C_R$. Due to the cleaning operation of $G$ at Line~\ref{line:while-one}-\ref{line:while-one-end}, $D_L$ and $D_R$ have no common neighbors, i.e., $N[D_L] \cap N[D_R] = \emptyset$. Moreover, as $D$ is a dominating set for the bipartite connected graph $C$, it must be the case that $C_L \subseteq N[D_R]$ and $C_R \subseteq N[D_L]$. Also, recall due to the operations at Line~\ref{line:while-four}-\ref{line:while-four-end} we will be able to conclude that $N[C]$ is a connected component in $G$ and $S$ is an induced subgraph of $G$. Now instead of finding $C_L$ and $C_R$ precisely, we find maximum weight independent sets $I_L$ and $I_R$ on graphs $G[N[D_R]]$ and $G[N[D_L]]$, respectively. Notice that due to our discussions above, $I_L$ and $I_R$ must be disjoint. We then show that $C_{D,D'} = I_L \uplus I_R$ serves as a replacement for $V(C)$, and thus we add $C_{D,D'}$ to $\mathcal{C}$ at Line~\ref{line:update}.

We next state a known result regarding computation of independent sets on $P_5$-free graphs. 

\begin{proposition}[\cite{DBLP:conf/soda/LokshantovVV14}, Independent Set on $P_5$-free]\label{prop:is}
    Given a graph $G$ and a weight function $\w:V(G) \to \mathbb{Q}_{\geq 0}$, there is a polynomial-time algorithm that outputs a set $I \subseteq V(G)$ such that $I$ is an independent set in $G$ and $\w(I)=\sum_{u \in I}\w(u)$ is the maximum.
\end{proposition}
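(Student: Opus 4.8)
The plan here is short: Proposition~\ref{prop:is} is precisely the theorem of Lokshtanov, Vatshelle, and Villanger~\cite{DBLP:conf/soda/LokshantovVV14}, and in this paper it is used purely as a black box, so I would simply invoke it verbatim to produce the maximum weight independent sets $I_L$ and $I_R$ needed in the proof of Lemma~\ref{lem:main}. For context, let me nonetheless indicate the shape of how such a statement gets proved, since I would not re-derive it from scratch. One first reduces to the essential case: a maximum weight independent set of a disconnected graph is the disjoint union of those of its components, so it suffices to treat connected $G$; and vertices of weight $0$ may be discarded (the statement already restricts $\w$ to be nonnegative, matching the normalization already used for \OCT\ in Section~\ref{sec:prelims}).

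The core would be a recursion on connected $P_5$-free graphs driven by Proposition~\ref{prop:dom}. Fix a dominating set $D$ with $G[D]$ a clique or an induced $P_3$. A maximum weight independent set $I$ meets $D$ in at most one vertex when $D$ is a clique, and in at most the two endpoints of that $P_3$ when $D$ is a $P_3$; so one branches over the $\OO(|D|)$ (respectively $\OO(1)$) possibilities for $I \cap D$. Whenever this intersection is nonempty, fixing a vertex $v \in I$ lets one recurse on the strictly smaller $P_5$-free graph $G - N[v]$ and add $\w(v)$. The delicate case is $I \cap D = \emptyset$, where every vertex of $I$ is a strict neighbor of $D$: here one must exploit the structure of the neighborhood of $D$, and a natural tool is modular decomposition, since an independent set interacts with a module $M$ only through the binary choice ``take the best independent set inside $M$, or avoid $M$ entirely'', so modules can be collapsed. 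The aim would be to split even this case into strictly smaller $P_5$-free pieces, after which a carefully chosen potential/measure on the recursion bounds the number of generated subproblems, hence the running time, by a polynomial in $n$.

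The hard part is exactly this last case: one must set the recursion up so that being forced off the dominating set still yields genuine, measurable progress, and keep the branching controlled enough that the polynomial does not blow up. This is the technically heaviest ingredient of~\cite{DBLP:conf/soda/LokshantovVV14}, and rather than attempt a self-contained re-derivation I would cite the proposition as stated and simply apply it to the two weighted instances $(G[N[D_R]],\w)$ and $(G[N[D_L]],\w)$ that arise in the proof of Lemma~\ref{lem:main}.
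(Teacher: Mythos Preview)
Your proposal is correct in substance: the paper does not prove Proposition~\ref{prop:is} at all, it simply cites \cite{DBLP:conf/soda/LokshantovVV14} and uses the algorithm as a black box at Lines~\ref{line:ind-one}--\ref{line:ind-two} of Algorithm~\ref{alg:key}, exactly as you say.

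One remark on the optional sketch you offered: the actual argument in \cite{DBLP:conf/soda/LokshantovVV14} is not the branching-on-a-dominating-set scheme you outline. Their algorithm proceeds via the framework of minimal separators and potential maximal cliques: they show that a $P_5$-free graph has only polynomially many minimal separators (and hence polynomially many potential maximal cliques), and then invoke the general machinery of Bouchitt\'e--Todinca / Fomin--Villanger that solves \textsc{Maximum Weight Independent Set} in time polynomial in $n$ plus the number of potential maximal cliques. The dominating-$P_3$/clique structure of Proposition~\ref{prop:dom} does play a role, but in bounding the number of minimal separators rather than in a recursive branching. Since you explicitly defer to the citation rather than re-deriving it, this does not affect the correctness of your proposal, but the contextual description should not be taken as a faithful summary of \cite{DBLP:conf/soda/LokshantovVV14}.
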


\begin{algorithm}[t] 
\caption{Isolating a connected Component} 
\label{alg:key} 
\begin{algorithmic}[1] 
    \REQUIRE An undirected graph $G'$, a vertex $v \in V(G')$ and a weight function $\w : V(G') \to \mathbb{Q}_{\geq 0}$
    \ENSURE $\mathcal{C} \subseteq 2^{V(G')}$ satisfying the properties of Lemma~\ref{lem:main}
    
    \STATE Initialize $\mathcal{C}=\{\{v\} : v \in V(G) \}$.\label{line:init}

    \FORALL{$D \subseteq {V(G') \choose {2 \leq i \leq 3}}$, where $G'[D]$ is connected and bipartite}\label{line:for-one}
    	\STATE Initialize $G=G'$.\label{line:init-graph-G}
	\STATE Fix a bipartition $D=D_L \uplus D_R$.\label{line:for-two}
            \WHILE{there exists $u \in N(D_L) \cap N(D_R)$}\label{line:while-one}
                  \STATE Delete $u$ from $G$. That is, $G=G-u$.
            \ENDWHILE\label{line:while-one-end}
            \WHILE{there exists a connected component $Z$ of $G-N[D]$ such that $V(Z)$ is not a module in $G$}\label{line:while-two}
                 \STATE Delete $Z$ from $G$. That is, $G=G- V(Z)$.
            \ENDWHILE\label{line:while-two-end}

            \FORALL{$D' \subseteq {V(G) \choose \leq 3}$}\label{line:for-three}
                \STATE Let $C_{D,D'}^{*} = N[D \cup D']$.\label{line:supersets}
                \WHILE{there exists $u \in C_{D,D'}^{*}$ such that $N_{G'}(u) \setminus C_{D,D'}^{*} \neq \emptyset$}\label{line:while-four}
                      \STATE Delete $u$ from $G$. That is, $G=G-u$.
                \ENDWHILE\label{line:while-four-end}
                \STATE Let $I_L$ be a maximum weight independent set obtained by running the algorithm of Proposition~\ref{prop:is} on input $(G[N(D_R)],\w)$.\label{line:ind-one}
                \STATE Let $I_R$ be a maximum weight independent set obtained by running the algorithm of Proposition~\ref{prop:is} on input $(G[N(D_L)],\w)$.\label{line:ind-two}
                \STATE Let $C_{D,D'}=I_L \cup I_R$.\label{line:union}
                \STATE Update $\mathcal{C}=\mathcal{C} \cup \{C_{D,D'}\}$. \label{line:update}
            \ENDFOR
    \ENDFOR
\end{algorithmic}
\end{algorithm}

We remark that for our algorithm we need the algorithm of Proposition~\ref{prop:is} to return a maximum weight independent set and not just the weight such a set. The arguments in~\cite{DBLP:conf/soda/LokshantovVV14} suffice to output such a set.
The following observation is immediate from the description of the algorithm and Proposition~\ref{prop:is}.

\begin{observation}\label{lem:key-running-time}
Algorithm~\ref{alg:key} runs in polynomial time.
\end{observation}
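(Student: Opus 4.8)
The plan is to bound the number of iterations of each loop and the cost of each basic step, and then multiply. First I would note that the outer \textbf{for} loop at Line~\ref{line:for-one} ranges over subsets $D$ of $V(G')$ of size two or three, of which there are $\OO(n^3)$; enumerating them and checking the side condition that $G'[D]$ is connected and bipartite takes polynomial time. Similarly, the inner \textbf{for} loop at Line~\ref{line:for-three} ranges over subsets $D'$ of $V(G)$ of size at most three, again $\OO(n^3)$ choices. Consequently the body of the inner \textbf{for} loop is executed $\OO(n^6)$ times in total, which also explains the bound $\OO(n^6)$ on $|\mathcal{C}|$ claimed in Lemma~\ref{lem:main}.

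Next I would handle the three \textbf{while} loops, at Lines~\ref{line:while-one}, \ref{line:while-two}, and \ref{line:while-four}. In each of them, every iteration deletes at least one vertex from $G$, and no vertex is ever added back to $V(G)$, so each \textbf{while} loop executes at most $n$ times. It then remains to check that evaluating the loop guard and performing the body costs polynomial time: for Line~\ref{line:while-one} one computes $N(D_L)\cap N(D_R)$; for Line~\ref{line:while-two} one computes the connected components of $G-N[D]$ and, as observed in the text just before the algorithm, tests in polynomial time whether each such component is a module; for Line~\ref{line:while-four} one checks, for each $u\in C^*_{D,D'}$, whether $N_{G'}(u)\setminus C^*_{D,D'}\neq\emptyset$. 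All of these are elementary graph computations. The remaining steps — Lines~\ref{line:init}, \ref{line:init-graph-G}, \ref{line:for-two}, \ref{line:supersets}, \ref{line:union}, and \ref{line:update} — are linear-time set operations, while Lines~\ref{line:ind-one} and \ref{line:ind-two} invoke the algorithm of Proposition~\ref{prop:is}, which runs in polynomial time (note that an induced subgraph of a $P_5$-free graph is again $P_5$-free, so the hypothesis of Proposition~\ref{prop:is} is met).

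Putting this together: each of the $\OO(n^3)$ iterations of the outer loop performs two \textbf{while} loops of polynomial cost and $\OO(n^3)$ iterations of the inner loop, each of which performs one more \textbf{while} loop of polynomial cost together with two calls to the polynomial-time routine of Proposition~\ref{prop:is}; hence the total running time is polynomial. There is essentially no obstacle here; the only points requiring a little care are verifying that every \textbf{while} loop terminates after at most $n$ iterations (which is immediate since each iteration strictly shrinks $V(G)$) and that the module test of Line~\ref{line:while-two} runs in polynomial time, which is exactly the remark preceding Algorithm~\ref{alg:key}.
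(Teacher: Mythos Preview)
Your argument is correct and is essentially a careful unpacking of what the paper asserts in one line: the observation is declared ``immediate from the description of the algorithm and Proposition~\ref{prop:is}'', and your bounds on the two \textbf{for} loops, the three vertex-deleting \textbf{while} loops, and the polynomial-time module test and independent-set calls are exactly the ingredients that make it immediate.
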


We prove the correctness of our algorithm in the next lemma.

\begin{lemma}\label{lem:key-correctness}
    The family $\mathcal{C}$ outputted by Algorithm~\ref{alg:key} satisfies the properties of Lemma~\ref{lem:main}.
\end{lemma}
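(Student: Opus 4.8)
The plan is to verify the two properties of Lemma~\ref{lem:main} for the family $\mathcal{C}$ produced by Algorithm~\ref{alg:key}. Property~1 (each $C \in \mathcal{C}$ induces a bipartite subgraph) is the easy part: the singletons added in Line~\ref{line:init} are trivially bipartite, and for a set $C_{D,D'} = I_L \cup I_R$ added in Line~\ref{line:update}, both $I_L$ and $I_R$ are independent sets (in $G[N(D_R)]$ and $G[N(D_L)]$ respectively), so $(I_L, I_R)$ is a valid bipartition of $G[C_{D,D'}]$ regardless of what happened during the cleaning phases. Property~2 is the substance of the proof and requires the structural claims that were announced in the section text, so I would first state and prove those claims in order, then assemble them into an exchange argument.

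\medskip
\noindent\textbf{The structural claims.} Fix an optimal solution $S$ and a connected component $C$ of $G'[S]$ with $|V(C)| \geq 2$. By Proposition~\ref{prop:dom} applied to the connected $P_5$-free graph $C$ (which, being bipartite, has no clique on more than two vertices), $C$ has a dominating induced $P_2$ or $P_3$; fix such a $D$ with its bipartition $D_L \uplus D_R$ aligned with $C_L \uplus C_R$. This $D$ is one of the sets iterated over in Line~\ref{line:for-one}. Now I run through the execution of the loop body with this $D$, maintaining the invariant that $S \subseteq V(G)$ (equivalently, $C$ is still a component of $G[S]$) throughout. The deletions in Lines~\ref{line:while-one}--\ref{line:while-one-end} are safe because any $u \in N(D_L) \cap N(D_R)$ would, if in $S$, create an odd closed walk through $u$ and $D$, contradicting that $C$ is bipartite; so $u \notin S$. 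After these deletions $N[D_L] \cap N[D_R] = \emptyset$. Next I prove Claim~\ref{claim:no-across-edges} ($E(X,Y) = \emptyset$ where $X = N(C)\setminus N(D)$, $Y = V(G)\setminus N[C]$): an edge $xy$ with $x \in X$, $y \in Y$ together with a vertex $c \in V(C)$ adjacent to $x$ and a shortest path from $c$ into $D$ would yield an induced $P_5$ — this is where $P_5$-freeness is crucially used, and the details amount to checking that $x$ is nonadjacent to all of $D$ (by definition of $X$) and $y$ is nonadjacent to all of $N[C]$, so the path $y$–$x$–$c$–($\cdots$)–$D$ cannot be short-circuited. Then Claim~\ref{claim:module}: each component of $G[Y]$ is a module, because $P_5$-freeness forces that if some vertex of $Y$-component $Z$ has a neighbor $w$ outside $Z$, then $w$ is adjacent to all of $Z$ (a non-edge inside $Z$ reachable appropriately plus a two-step path to $w$'s far side builds a $P_5$; here one also uses that $N(Y)\cap N[C] = N(C)\setminus N(D) \cup \ldots$, more precisely that $N(Y)$ avoids $C$ by definition of $Y$). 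Finally Claim~\ref{claim:ds-c-and-x}: a dominating set of $G[X \cup V(C)]$ extending $D$ by at most $3$ vertices — since $D$ already dominates $V(C)$, one only needs to dominate $X$, and a bounded-size dominating set for $X$ within $G[X\cup V(C)]$ follows from the structure of $P_5$-free graphs (each vertex of $X$ has a neighbor in $C$, and $P_5$-freeness bounds how many "types" of such attachment there are — again a short case analysis on induced paths, possibly routed through Proposition~\ref{prop:dom} on $G[X \cup V(C)]$ itself).

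\medskip
\noindent\textbf{Assembling the exchange argument.} With the claims in hand: Claim~\ref{claim:no-across-edges} implies every component $Z$ of $G - N[D]$ satisfies $V(Z)\subseteq X$ or $V(Z)\subseteq Y$. Claim~\ref{claim:module} says the $Y$-components are modules, so any component of $G - N[D]$ that is \emph{not} a module lies in $X$, hence disjoint from $S$; Lines~\ref{line:while-two}--\ref{line:while-two-end} delete exactly these, preserving $S \subseteq V(G)$. Let $R \subseteq X$ be what remains. By Claim~\ref{claim:ds-c-and-x} there is a set $D'$ with $|D'|\leq 3$ and $D \cup D'$ dominating $G[R \cup V(C)]$ — here one must check $D'$ survived the earlier deletions, which it does because $\widetilde D \subseteq X \cup V(C) \subseteq R \cup V(C)$ and none of those got deleted. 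This $D'$ is iterated over in Line~\ref{line:for-three}. For this choice, $C^*_{D,D'} = N[D\cup D']$; I claim $C^*_{D,D'} = N[C]$. The inclusion $N[C] \subseteq N[D \cup D']$: every vertex of $N[C]$ is in $V(C) \cup X \cup (\text{stuff in } N(D))$; vertices of $V(C) \cup R$ are dominated by $D\cup D'$ by construction, and the remaining neighbors of $C$ lie in $N(D)$ hence in $N[D]$. The reverse inclusion $N[D\cup D'] \subseteq N[C]$: $D \subseteq V(C)$ so $N[D] \subseteq N[C]$, and $D' \subseteq R \cup V(C) \subseteq N[C]$ with its neighbors... this direction needs a touch more care, essentially that $D'$ was chosen inside $N[C]$ and $P_5$-freeness / the module cleaning keeps $N[D']\subseteq N[C]$; this is the point I'd expect to fuss over. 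Given $C^*_{D,D'} = N[C]$, the cleaning in Lines~\ref{line:while-four}--\ref{line:while-four-end} deletes only vertices of $N[C]$ with a neighbor outside $N[C]$ in $G'$ — none of these can be in $S$ (such a vertex has a neighbor outside $N[C] \supseteq N[S\text{'s component}]$, so it would extend $C$), so again $S \subseteq V(G)$ and now $N[C]$ is a full connected component of the current $G$ with $S$ an induced subgraph. Finally, since $D$ dominates $C$ and $N[D_L]\cap N[D_R]=\emptyset$, we get $C_L \subseteq N(D_R)$, $C_R \subseteq N(D_L)$ (using that after all cleaning these closed neighborhoods don't spill outside the component), so $C_L$ is an independent set in $G[N(D_R)]$ and $C_R$ an independent set in $G[N(D_L)]$; hence $\w(I_L) \geq \w(C_L)$ and $\w(I_R) \geq \w(C_R)$ where $I_L, I_R$ are the sets from Lines~\ref{line:ind-one}--\ref{line:ind-two}. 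Moreover $I_L \subseteq N(D_R)$ and $I_R \subseteq N(D_L)$ are disjoint (their host sets are disjoint) and $G[I_L \cup I_R] \subseteq G[N[C]]$, a component separated from the rest of $G'$, so $C_{D,D'} = I_L \cup I_R$ has no edges to $V(G')\setminus N[C]$; thus $(S \setminus V(C)) \cup C_{D,D'}$ is still bipartite — its component structure outside $N[C]$ is untouched, and $G[C_{D,D'}]$ is bipartite by Property~1 — and has weight $\geq \w(S)$, hence equals an optimal solution. Iterating this replacement over all components of $G'[S]$ (each component is independently replaced by some $C_{D,D'}$ or a singleton already in $\mathcal{C}$, and the replacements are pairwise non-adjacent and disjoint since they live inside the pairwise disjoint, pairwise non-adjacent sets $N[C]$) yields the set $S$ and subfamily $\mathcal{C}'$ required by Property~2.

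\medskip
\noindent\textbf{Main obstacle.} The hard part is proving the three structural claims — in particular Claim~\ref{claim:ds-c-and-x} and the equality $C^*_{D,D'} = N[C]$ (both inclusions), since these are where the interplay between $P_5$-freeness, the module-cleaning step, and the dominating-$P_3$-or-clique structure has to be pinned down exactly; the exchange/replacement bookkeeping afterwards is conceptually routine but must be done carefully to maintain the invariant $S \subseteq V(G)$ after each of the four cleaning phases. The $\OO(n^6)$ size bound and polynomial running time are immediate: there are $\OO(n^3)$ choices of $D$, $\OO(n^3)$ choices of $D'$, and each iteration does polynomial work by Observation~\ref{lem:key-running-time} and Proposition~\ref{prop:is}.
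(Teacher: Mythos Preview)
Your overall architecture matches the paper's proof closely: verify Property~1 directly, then for Property~2 fix an optimal $S$, a component $C$, a dominating $P_2$/$P_3$ $D$, walk through the algorithm maintaining $S\subseteq V(G)$, prove Claims~\ref{claim:no-across-edges}--\ref{claim:ds-c-and-x}, show $N[\widetilde D]=N[C]$, and finish with an exchange. Two points deserve attention.

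\textbf{Claim~\ref{claim:ds-c-and-x}.} Your sketch (``each vertex of $X$ has a neighbor in $C$, and $P_5$-freeness bounds how many `types' of such attachment there are'') does not capture the actual mechanism, and applying Proposition~\ref{prop:dom} to $G[R\cup V(C)]$ is only the first step. The issue is that the dominating set it produces may be a \emph{clique}, potentially large. The paper's argument is: a clique $D'$ meets at most one component $R_i$ of $G[R]$ and at most two vertices of $C$; crucially, because $R_i$ is a \emph{module} (this is exactly why Lines~\ref{line:while-two}--\ref{line:while-two-end} were run), any single neighbor $v'\in V(C)$ of a vertex of $R_i$ dominates all of $R_i$. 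Hence $D\cup\{v'\}\cup(D'\cap V(C))$ works. Your sketch never invokes the module property here, and without it there is no bound on $|D'|$. Relatedly, the claim must be stated for $R$ (post-cleaning), not $X$; the module structure is only guaranteed after Lines~\ref{line:while-two}--\ref{line:while-two-end}.

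\textbf{The final assembly.} You write that the replacements for different components ``live inside the pairwise disjoint, pairwise non-adjacent sets $N[C]$''. This is false: for two components $C_1,C_2$ of $G'[S]$, the sets $N_{G'}[C_1]$ and $N_{G'}[C_2]$ can share vertices of $V(G')\setminus S$. The paper sidesteps this by choosing $S$ optimal with the \emph{maximum number of components already in $\mathcal{C}$} and deriving a contradiction from a single replacement; no simultaneous replacement is needed. Your iterated-replacement approach can be made to work, but it requires a separate argument that $C_{D_1,D_1'}$ and $C_{D_2,D_2'}$ are disjoint and non-adjacent in $G'$ (using that each consists of vertices with no $G'$-neighbor outside the respective $N_{G'}[C_i]$, and then chasing a short contradiction through $S$); the one-line justification you gave is not it.

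For the inclusion $N[\widetilde D]\subseteq N[C]$ that you flagged: the paper's argument is simply that $D'\subseteq R\cup V(C)$ and, by Claim~\ref{claim:no-across-edges}, $E(R,Y)=\emptyset$, so every vertex of $D'$ has $N[\cdot]\subseteq N[C]$. No further $P_5$-freeness or module reasoning is needed at that step.
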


\begin{proof}
    To bound the size of the family $\mathcal{C}$, observe that $\mathcal{C}$ is initialized in Line~\ref{line:init} and updated only in Line~\ref{line:update}. At Line~\ref{line:init} the size of $\mathcal{C}$ is $n$ and Line~\ref{line:update} is executed at most $O(n^6)$ times: inside the for-loop at Line~\ref{line:for-one} which contains the for-loop at Line~\ref{line:for-three}. 

    It is also easy to see that the sets of $\mathcal{C}$ induce bipartite subgraphs of $G$.
    Indeed, at Line~\ref{line:init} we add singleton sets,
    and the sets $C_{D,D'}$ added at Line~\ref{line:update} are the union of two independent sets $I_L$ and $I_R$ (Line~\ref{line:union}),
    and therefore $G[C_{D,D'}]$ is bipartite.
        
    We will now show the second property of the lemma. Let $S \subseteq V(G')$ such that $G'[S]$ is bipartite, $\w(S)$ is maximum and $S$ has the maximum number of connected components from $\mathcal{C}$. 
    If all connected components of $S$ are in $\mathcal{C}$ then we are done. Otherwise let $C$ be a connected component of $G'[S]$ such that $C \not \in \mathcal{C}$.
    If $C$ has exactly one vertex
    then $C \in \mathcal{C}$ from Line~\ref{line:init}, and hence a contradiction.
    Therefore assume that $|V(C)| \geq 2$. As $C$ is a connected, $P_5$-free and bipartite graph on at least $2$ vertices (therefore triangle-free, from Proposition~\ref{prop:dom} there exists a dominating set $D$ of $C$ which either induces a $P_3$ or a $P_2$. We consider the execution of the for-loop at Line~\ref{line:for-one} for this $D$, and let $D_L \uplus D_R$ be the bipartition of $D$ considered at Line~\ref{line:for-two} of Algorithm~\ref{alg:key}. Furthermore, let $C_L\uplus C_R$ be the bipartition of $C$ such that $D_L\subseteq C_L$ and $D_R\subseteq C_R$.

    \noindent{\bf [Lines~\ref{line:while-one}-\ref{line:while-one-end}]} Note that any vertex $u \in N(D_L) \cap N(D_R)$ is not in $C$. Furthermore,
    such a vertex $u \not \in S$ 
    because $u \in N(C)$ (since $D \subseteq V(C)$). Thus, $S$ is an induced bipartite subgraph of $G$. 
    
    \noindent{\bf [Lines~\ref{line:while-two}-\ref{line:while-two-end}]} We will now show that any connected component of $G-N[D]$ which is not a module, is a subset of $N(C)$. Once this is proved, $S$ also induces a bipartite subgraph in $G- V(Z)$, where $Z$ is a connected component of $G-N[D]$ that is not a module. Let $X=N(C) \setminus N(D)$ and let $Y= V(G) \setminus N[C]$. 
    
                \begin{figure}[t]
    \centering
    \includegraphics[width=.7\textwidth]{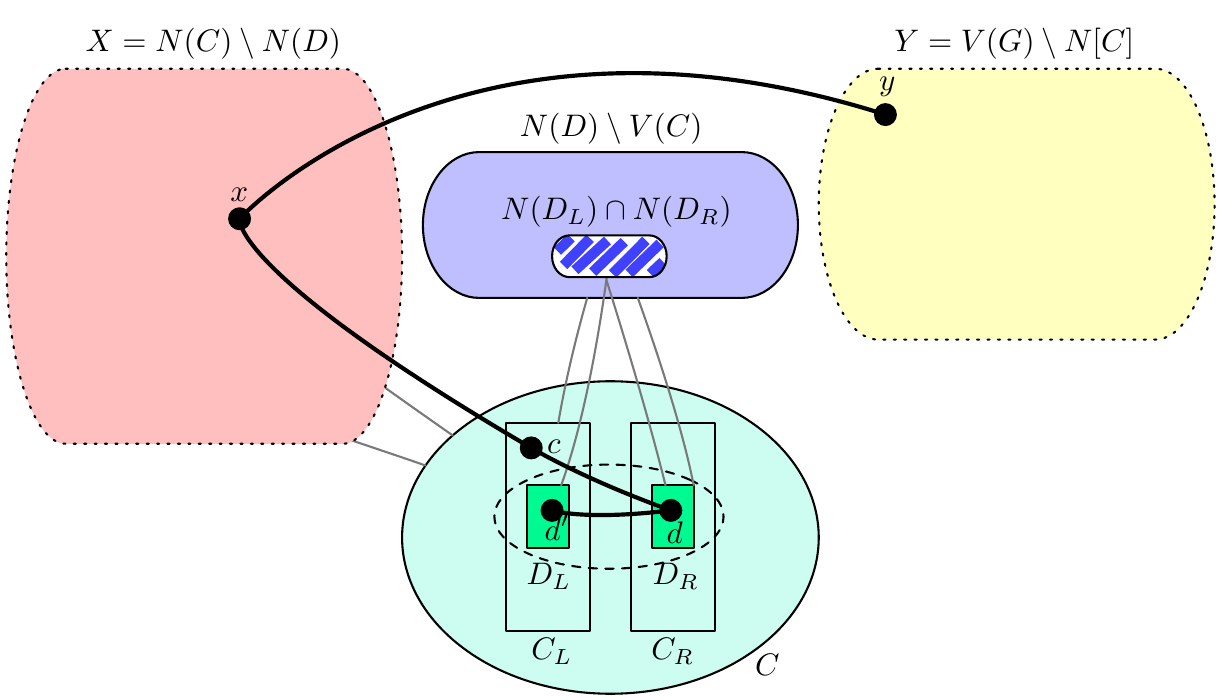}
    \caption{Illustration of various elements in the proof of Claim~\ref{claim:no-across-edges}.}
    \label{fig:fig-X-Y-no-edge}
\end{figure} 

    \begin{claim}\label{claim:no-across-edges}
    $E(X,Y) = \emptyset$.
    \end{claim}
    \begin{proof}
    Suppose for the sake of contradiction that there exists $y \in Y$ and $x \in X$ such that $(y,x) \in E(G)$ (see Figure~\ref{fig:fig-X-Y-no-edge}). Since $x \in X$ and $X = N(C)\setminus N(D)$, there exists $c \in C$ such that $(x,c) \in E(G)$. Also $c \not \in D$, as otherwise $x \in N(D)$. Without loss of generality, say $c \in C_L \setminus D$ (the other case is symmetric). 
    Since $D$ is a dominating set of $C$, 
    there exists $d \in D_R$, 
    such that $(c,d) \in E(G)$.
    Let $d' \in D_L$, and note that $(d,d') \in E(G)$.

    Consider the path $P^*=(y,x,c,d, d')$. We claim $P^*$ is an induced $P_5$. 
    First observe that all the vertices of $P^*$ except $y,x$ are in $C$. 
    Also $E(Y,C) =\emptyset$ because $Y \cap N(C) = \emptyset$ by the definition of $Y$. In particular, $(y,c), (y,d), (y,d') \not \in E(G)$. Since $x \in X$ and $X \cap N(D) = \emptyset$, $(x,d), (x,d') \not \in E(G)$. Finally, since $c,d' \in C_L$ and $C_L$ is an independent set (because it is one of the parts of the bipartition of $C$), $(c,d') \not \in E(G)$.
\end{proof}
From Claim~\ref{claim:no-across-edges}, for any connected component $Z$ of $G-N[D]$, either $V(Z) \subseteq X$, or $V(Z) \subseteq Y$.

        \begin{figure}[t]
    \centering
    \includegraphics[width=.7\textwidth]{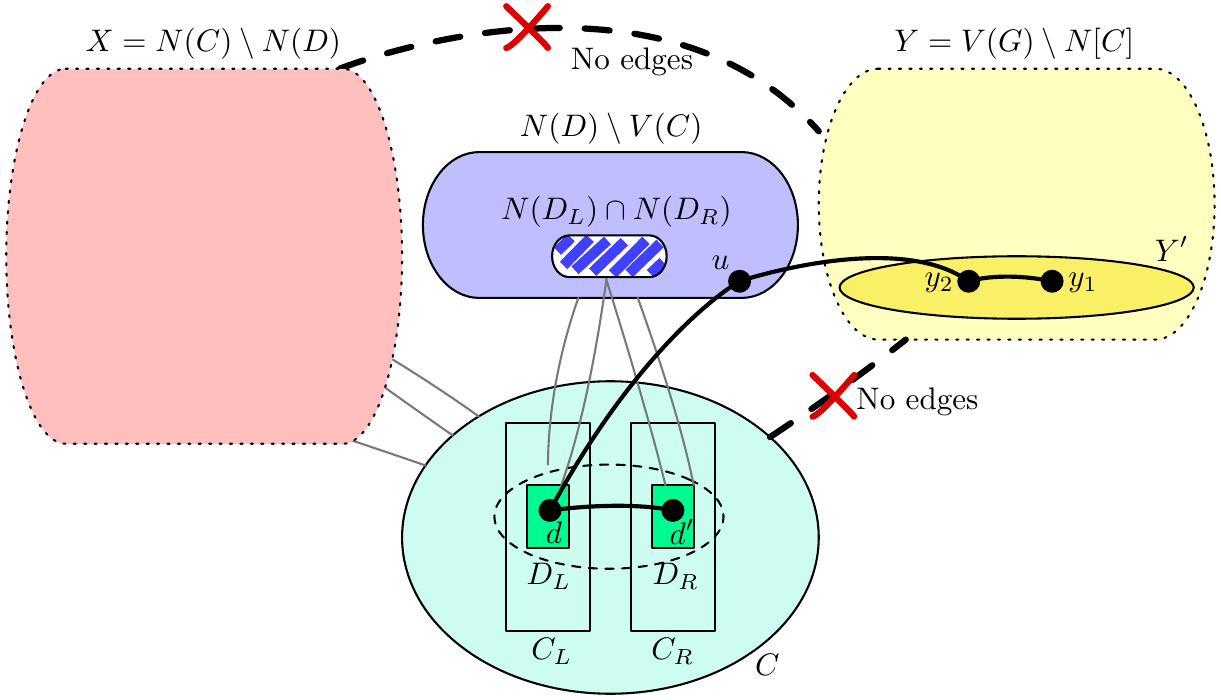}
    \caption{Illustration of various elements in the proof of Claim~\ref{claim:module}.}
    \label{fig:fig-module}
\end{figure}
    
    \begin{claim}\label{claim:module}
    Let $Y'$ be a connected component of $G[Y]$. Then $V(Y')$ is a module in $G$.
    \end{claim}
\begin{proof}
    First note that, from Claim~\ref{claim:no-across-edges}, $N_G(Y) \subseteq N(D)\setminus V(C)$. 
   For the sake of contradiction, say $V(Y')$ is not a module, and thus there exists $y_1,y_2 \in V(Y')$ and $u \in N(D) \setminus V(C)$ such that $(y_1,u) \not \in E(G)$ and $(y_1,y_2), (y_2,u) \in E(G)$ (see Figure~\ref{fig:fig-module}). Since $u \in N(D) \setminus V(C)$ and $D \subseteq V(C)$, 
   there exists $d \in D$ such that $(u,d) \in E(G)$. Without loss of generality let $d \in D_L$. Let $d' \in D_R$ (and note that $(d,d') \in E(G)$). Then $(u,d') \not \in E(G)$ as otherwise $u \in N(D_L) \cap N(D_R)$, which is a contradiction as such vertices do not exist (from Line~\ref{line:while-one}-\ref{line:while-one-end}). Then $P^*=(y_1, y_2, u, d, d')$ is an induced $P_5$ in $G$, which is a contradiction.
\end{proof}
From Claim~\ref{claim:module} if a connected component $Z$ of $G-N[D]$ is not a module, then $V(Z) \subseteq X$. Since $X \subseteq N(C)$, $S$ is also an induced subgraph of $G$ after the execution of Line~\ref{line:while-two}-\ref{line:while-two-end}. 

\noindent{{\bf [Line~\ref{line:for-three}]}} Let $R =N(C) \setminus N(D)$, i.e., $R$ are the remaining vertices of $N(C)$ that are not in $N(D)$ (note that $R$ are precisely the vertices of $X$ from the previous discussion that are not deleted at Line~\ref{line:while-two}-\ref{line:while-two-end}). We now show that there exists a small dominating set of $G[R \cup V(C)]$.

\begin{claim}\label{claim:ds-c-and-x}
$G[R \cup V(C)]$ has a dominating set $\widetilde{D}$ of size at most $6$ such that $D\subseteq \widetilde{D}$ and $|\widetilde{D} \setminus D| \leq 3$. 
\end{claim}
\begin{proof}
    Since $G[R \cup V(C)]$ is a connected and $P_5$-free graph, 
    from Proposition~\ref{prop:dom}, there exists a dominating set of $G[R \cup V(C)]$ which is either a clique or an induced $P_3$. If $D$ itself is a dominating set of $G[R \cup V(C)]$, then the claim trivially follows. Otherwise, we consider a dominating clique or a dominating induced $P_3$, $D'$ of $G[R \cup V(C)]$ of minimum possible size. If $D'$ induces a $P_3$ then, $\widetilde{D} = D'\cup D$ satisfies the requirement of the claim. Now we consider the case when $D'$ is a clique. Using $D'$ we will construct a dominating set of $G[R \cup V(C)]$ with at most $6$ vertices, containing the vertices from $D$. Intuitively speaking, apart from $D$ (whose size is at most $3$) we will add vertices from $D'\cap V(C)$ and at most one more vertex. We remark that as $D'$ is a clique and $C$ is a bipartite graph, $|D'\cap V(C)| \leq 2$. 
 
    Let $R_1, \ldots, R_p$ be the connected components of $G[R]$. 
    Since $D'$ is a clique, $D'$ intersects at most one $R_i$, i.e., there exists an $i \in [p]$, such that $D' \cap V(R_j) =\emptyset$, for all $j \in [p]\setminus \{i\}$.
    Therefore the vertices of $\bigcup_{j \in [p] \setminus \{i\}}V(R_i)$ are dominated by the vertices of $D' \cap V(C)$. If $D'\cap V(R_i) =\emptyset$, then notice that $D'\subseteq V(C)$, where $|D'|\leq 2$, and thus $\widetilde{D} = D'\cup D$ satisfies the requirement of the claim. Now suppose that $D'\cap V(R_i) \neq \emptyset$, and consider a vertex $x \in D' \cap V(R_i)$. Recall that $x\in N(C)$ and $R_i$ is a module in $G$. Thus, there exists a vertex $v' \in V(C)\cap N(x)$, and moreover, we have $V(R_i)\subseteq N(v')$. 
    Note that $D$ dominates each vertex in $C$, $v'$ dominates each vertex in $R_i$, and $D'\cap V(C)$ dominates each vertex in $\bigcup_{j \in [p] \setminus \{i\}} V(R_i)$. Thus, $\widetilde{D} = D \cup \{v'\} \cup (D' \cap V(C))$ dominates each vertex in $G[R \cup V(C)]$ and $|D'\cap V(C)| \leq 2$. This concludes the proof.  
\end{proof}

Now consider the execution of the for-loop at Line~\ref{line:for-three} when $\widetilde{D} = D'\cup D$ is a dominating set for $G[R \cup C]$. 

\noindent{{\bf [Line~\ref{line:supersets}]}} Our objective is to argue that the set $C^*_{D,D'}$ 
at Line~\ref{line:supersets} is equal to $ N[C]$. To obtain the above, it is enough to show that $N[\widetilde{D}] = N[C]$. To this end, we first obtain that $N[C] \subseteq N[\widetilde{D}]$. Recall that, after removing the vertices from $X$ at Line 8, $N[C] = R \cup V(C) \cup N(D)$. As $\widetilde{D} = D \cup D'$ is a dominating set for $G[R\cup V(C)]$, we have $R \cup V(C) \subseteq N[\widetilde{D}]$. Moreover, as $D\subseteq \widetilde{D}$, we have $N(D) \subseteq N[\widetilde{D}]$. Thus we can conclude that $N[C] \subseteq N[\widetilde{D}]$. We will next argue that $N[\widetilde{D}] \subseteq N[C]$. Recall that from Claim~\ref{claim:no-across-edges}, $E(R,Y) =\emptyset$. Thus, for any vertex $v\in D' \setminus V(C)$, $N(v) \subseteq N[C]$. In the above, when $v\in D' \setminus V(C)$, without loss of generality we can suppose that $v\in R \subseteq N(C)$, as $\widetilde{D} = D \cup D'$ is a dominating set for $G[R\cup V(C)]$. Thus, for each $v\in D' \setminus V(C)$, $N[v] \subseteq N[C]$. Also, $D\subseteq V(C)$, and thus, $N[D]\subseteq N[C]$. Hence it follows that $N[\widetilde{D}] \subseteq N[C]$. Thus we obtain our claim that, $N[\widetilde{D}] = N[C]$. 

\noindent{{\bf [Lines~\ref{line:while-four}-\ref{line:while-four-end}]}}
Since $C^*_{D,D'} = N[C]$, if there exists $u \in C^*_{D,D'}$ such that $u$ has a neighbor outside $C^*_{D,D'}$, then $u \in N(C)$ and hence $u \not \in S$. Thus $S$ induces a bipartite subgraph even in the graph obtained by deleting such vertices. Notice that after execution of these steps, $N[C]$ is a connected component of $G$, as removing a vertex from $N(C)$ cannot disconnect the graph $N[C]$. 

\noindent{{\bf [Lines~\ref{line:ind-one}}-\ref{line:union}]} 
Recall that $D$ is a dominating set of $C$ and $C_L \uplus C_R$ is a bipartition of $C$, where $C_L \subseteq N(D_R)$ and $C_R \subseteq N(D_L)$. Also, from Line~\ref{line:while-one}-\ref{line:while-one-end}, $N(D_L) \cap N(D_R) = \emptyset$. Let $I_L$ (resp.~$I_R$) be the maximum weight independent set in $G[N[D_R]]$ (resp.~$G[N[D_L]]$) computed at these steps. Then $\w(I_L) \geq \w(C_L)$ (resp.~$\w(I_R) \geq \w(C_R)$) because $C_L$ (resp.~$C_R$) is an independent set in $G[N[D_R]]$ (resp.~$G[N[D_L]]$). Thus, $\w(C_{D,D'}) \geq \w(V(C))$. 

Let $S'= (S\setminus V(C)) \cup C_{D,D'}$. Then $\w(S') \geq \w(S)$. Also $G[S']$ is bipartite because $G[C_{D,D'}]$ is bipartite and $E(S\setminus V(C), C_{D,D'}) =\emptyset$ because $N[C]$ is a connected component of (the reduced graph) $G$. Additionally $S'$ has more connected components in $\mathcal{C}$ compared to $S$, which contradicts the choice of $S$.
\end{proof}

The proof of Lemma~\ref{lem:main} follows from Observation~\ref{lem:key-running-time} and Lemma~\ref{lem:key-correctness}.

\section{Reduction to {\sc Maximum Weight Independent Set} on $P_5$-free Graphs 
and Proof of Theorem~\ref{thm:oct}
}\label{sec:algo}
In this section we show how, using Lemma~\ref{lem:main}, we can reduce, in polynomial time, the problem of finding a maximum weight induced bipartite subgraph on $P_5$-free graphs, to the problem of finding a maximum weight independent set on $P_5$-free graphs.
Recall that Lokshtanov et al.~\cite{DBLP:conf/soda/LokshantovVV14} gave a polynomial-time algorithm for the latter problem.
Their result, together with the reduction in this section proves Theorem~\ref{thm:oct}.

Recall that $(G,\w)$ is an instance of \OCT\
where $G$ is a $P_5$-free graph.
Let $\mathcal{C}$ be the family of vertex subsets of $G$
 obtained from Lemma~\ref{lem:main} on input $(G,\w)$.
 Let $\widetilde{\mathcal{C}} \subseteq 2^{V(G)}$ be the family containing the vertex set of each connected component in the graph induced by each set in the family $\mathcal{C}$. 
We say that two sets $C_1, C_2 \in \widetilde{\mathcal{C}}$ {\em touch} each other, if either $V(C_1) \cap V(C_2) \neq \emptyset$ or $E(C_1,C_2)\neq \emptyset$.
The following lemma reduces the task of finding a maximum weight bipartite subgraph in $G$ to that of finding a pairwise non-touching collection of sets in $\mathcal{C'}$ of maximum total weight.

\begin{lemma}\label{lem:oct-equiv-packing}
$(G,\w)$ has an induced bipartite subgraph of weight (defined by $\w$) $W$ if and only if there exists $\mathcal{C}' \subseteq \widetilde{\mathcal{C}}$ such that no two sets in $\mathcal{C}'$ touch each other and $\sum_{C \in \mathcal{C}'} \w(C) = W$.
\end{lemma}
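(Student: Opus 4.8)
The plan is to establish the natural correspondence between induced bipartite subgraphs of $G$ and pairwise non-touching sub-collections of $\widetilde{\mathcal{C}}$, routing the non-trivial direction through Lemma~\ref{lem:main}. The basic fact I will use throughout is that every $C \in \widetilde{\mathcal{C}}$ induces a connected bipartite subgraph of $G$: by construction $C$ is a connected component of $G[C^\star]$ for some $C^\star \in \mathcal{C}$, and $G[C^\star]$ is bipartite by the first property of Lemma~\ref{lem:main}, hence so is $G[C]$.

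For the ``if'' direction, suppose $\mathcal{C}' \subseteq \widetilde{\mathcal{C}}$ is pairwise non-touching with $\sum_{C \in \mathcal{C}'}\w(C) = W$, and set $S := \bigcup_{C \in \mathcal{C}'} C$. Since no two members of $\mathcal{C}'$ touch, they are pairwise disjoint, so $\w(S) = \sum_{C \in \mathcal{C}'}\w(C) = W$; and since there is no edge of $G$ between two distinct members, every edge of $G[S]$ lies inside some $G[C]$, that is, $G[S]$ is the disjoint union of the graphs $G[C]$ over $C \in \mathcal{C}'$. Each such $G[C]$ is bipartite by the fact above, and a disjoint union of bipartite graphs is bipartite, so $G[S]$ is an induced bipartite subgraph of $G$ of weight $W$.

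For the ``only if'' direction --- which is all the reduction needs (namely when $W$ is the maximum weight of an induced bipartite subgraph of $(G,\w)$) --- I would invoke the second property of Lemma~\ref{lem:main}: it yields a maximum-weight solution $S$ with $\w(S) = W$ and $S = \bigcup_{C \in \mathcal{C}_0} C$ for some $\mathcal{C}_0 \subseteq \mathcal{C}$ whose members are pairwise disjoint and have no edge of $G$ between any two of them. For each $C \in \mathcal{C}_0$ let $\kappa(C)$ denote the set of vertex sets of the connected components of $G[C]$; by definition $\kappa(C) \subseteq \widetilde{\mathcal{C}}$. Put $\mathcal{C}' := \bigcup_{C \in \mathcal{C}_0} \kappa(C)$. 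Two distinct members of $\mathcal{C}'$ arising from the same $C \in \mathcal{C}_0$ are distinct connected components of $G[C]$, hence disjoint and with no edge between them; two members arising from distinct $C_1, C_2 \in \mathcal{C}_0$ are contained in $C_1$ and $C_2$ respectively, which are disjoint and have no edge between them. Hence $\mathcal{C}'$ is pairwise non-touching (in particular its members are pairwise distinct, so summing $\w(\cdot)$ over $\mathcal{C}'$ is unambiguous), $\bigcup_{C' \in \mathcal{C}'} C' = \bigcup_{C \in \mathcal{C}_0} C = S$, and by disjointness $\sum_{C' \in \mathcal{C}'}\w(C') = \w(S) = W$.

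This is essentially a bookkeeping lemma, so I do not anticipate a real obstacle; the two points deserving care are (i) that an arbitrary induced bipartite subgraph need not decompose along $\widetilde{\mathcal{C}}$, which is precisely why the ``only if'' direction must be applied to an optimal solution and passed through Lemma~\ref{lem:main} rather than proved directly for every $W$, and (ii) checking that refining the sets of $\mathcal{C}_0$ into their connected components preserves both the non-touching property and, via disjointness, the total weight.
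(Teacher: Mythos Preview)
Your proof is correct and follows the same approach as the paper: the ``if'' direction is immediate from bipartiteness of each $G[C]$, and the ``only if'' direction goes through Lemma~\ref{lem:main} applied to an optimal solution, then refines into connected components to land in $\widetilde{\mathcal{C}}$. You are in fact more explicit than the paper on two points it leaves implicit---the passage from $\mathcal{C}_0 \subseteq \mathcal{C}$ to $\mathcal{C}' \subseteq \widetilde{\mathcal{C}}$ via connected components, and the observation that the ``only if'' direction is only claimed (and only needed) for the optimal weight $W$.
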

\begin{proof}
    From Lemma~\ref{lem:main}, there exists $S \subseteq V(G)$ such that $G[S]$ is bipartite, the weight of $S$ (with respect to $\w$) is maximum, and all the connected components of $G[S]$ are contained in $\widetilde{\mathcal{C}}$. 
    Therefore $S$ corresponds to a pairwise non-touching sub-collection of $\widetilde{\mathcal{C}}$ of total weight equal to $\w(S)$. For the other direction, since every set in $\widetilde{\mathcal{C}}$ is connected and bipartite, we conclude that the union of the sets in any sub-collection of $\widetilde{\mathcal{C}}$, that contains no two sets that touch each other, forms an induced bipartite subgraph of $G$.
\end{proof}

Let $G^{\blob}_{\widetilde{\mathcal{C}}}$ be an auxiliary graph whose vertex set corresponds to sets in $\widetilde{\mathcal{C}}$ and there is an edge between two vertices of $G^{\blob}_{\widetilde{\mathcal{C}}}$ if and only if the corresponding sets touch other. Formally, $V(G^{\blob}_{\widetilde{\mathcal{C}}}) =\{v_C : C \in \widetilde{\mathcal{C}}\}$ and for any $v_C,v_{C'} \in V(G^{\blob}_{\widetilde{\mathcal{C}}})$, there exists $(v_C,v_{C'}) \in E(G^{\blob}_{\widetilde{\mathcal{C}}})$ if and only if $C$ and $C'$ touch each other. Let $\w^{\blob}: V(G^{\blob}_{\widetilde{\mathcal{C}}}) \to \mathbb{Q}$ be defined as follows: $\w^{\blob}(v_C) = \sum_{u \in C} \w(u)$. 
In the vocabulary of Gartland et al.~\cite{Gartlandetal2021}, $G^{\blob}_{\widetilde{\mathcal{C}}}$ is an induced subgraph of \emph{the blob graph of $G$}, and they observe that the blob graph (and thus its every induced subgraph) of a $P_5$-free graph is $P_5$-free.

\begin{proposition}[Theorem~$4.1$,~\cite{Gartlandetal2021}]\label{prop:blob}
    If $G$ is $P_5$-free, then $G^{\blob}_{\widetilde{\mathcal{C}}}$ is also $P_5$-free.
\end{proposition}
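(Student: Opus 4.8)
\textbf{Proof proposal for Proposition~\ref{prop:blob}.}

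The plan is to show the contrapositive: given an induced $P_5$ in $G^{\blob}_{\widetilde{\mathcal{C}}}$, we extract an induced $P_5$ in $G$. Suppose $v_{C_1}, v_{C_2}, v_{C_3}, v_{C_4}, v_{C_5}$ is an induced path in $G^{\blob}_{\widetilde{\mathcal{C}}}$. By definition of the blob graph, the consecutive sets $C_i$ and $C_{i+1}$ touch (so they intersect or have an edge between them), while non-consecutive sets are pairwise disjoint and have no edges between them. Each $C_i$ is a connected subset of $V(G)$, being (the vertex set of) a connected component of $G[C]$ for some $C \in \mathcal{C}$. First I would reduce to the case where each touching pair actually has an edge rather than merely intersecting: if $C_i \cap C_{i+1} \neq \emptyset$ pick a common vertex; this only makes the construction easier, so I will describe the generic ``edge'' case and note that the intersection case is an easy degeneration.

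The key step is to carve out a short induced path inside $\bigcup_i C_i$. For each consecutive pair choose a witnessing edge $a_i b_i$ with $a_i \in C_i$, $b_i \in C_{i+1}$ (for $i = 1,\dots,4$); so inside $C_i$ we have the two ``ports'' $b_{i-1}$ and $a_i$ (with the convention that $C_1$ only has port $a_1$ and $C_5$ only has port $b_4$). Since $G[C_i]$ is connected, pick a shortest $b_{i-1}$--$a_i$ path $Q_i$ inside $C_i$; it is an induced path in $G$ because it is a shortest path in an induced subgraph. Concatenating $Q_1, a_1b_1, Q_2, a_2b_2, Q_3, a_3b_3, Q_4, a_4b_4, Q_5$ yields a walk from (a vertex of) $C_1$ to (a vertex of) $C_5$ that is actually an \emph{induced} path $W$ in $G$: within each $C_i$ we used an induced path, the connecting edges $a_ib_i$ are present, and crucially there are no chords \emph{between} non-consecutive blocks because non-consecutive $C_j, C_k$ have no edges between them in $G$; chords between consecutive blocks $C_i, C_{i+1}$ other than $a_ib_i$ would have to be handled by choosing $Q_i, Q_{i+1}$, and the ports, carefully, but one can always do this by, e.g., taking $a_i b_i$ to be an edge realizing the \emph{minimum} distance between $C_i$ and $C_{i+1}$ in $G$, which rules out shortcuts. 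Now $W$ has at least five vertices: it starts in $C_1$, ends in $C_5$, and visits $C_2, C_3, C_4$, which are five pairwise disjoint sets, so $W$ contains at least one vertex from each, hence $|V(W)| \geq 5$. Taking the first five vertices of an induced path of length $\geq 5$ gives an induced $P_5$ in $G$, contradicting $P_5$-freeness.

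The main obstacle I anticipate is making the ``no chords between consecutive blocks'' argument airtight: a priori a vertex of $C_i$ could be adjacent to many vertices of $C_{i+1}$, so one must argue that the chosen connecting edge and the chosen internal paths can be taken to avoid creating chords, or alternatively argue that any such chord would itself let us shortcut to a smaller configuration. The cleanest route, and the one I would pursue, is the standard one for such blob-graph arguments: among all induced paths in $G$ that hit $C_1$ and $C_5$ and are contained in $C_1 \cup C_2 \cup C_3 \cup C_4 \cup C_5$, take a shortest one; a shortest such path cannot have a chord (a chord would shorten it while keeping both endpoints, since the endpoints lie in $C_1$ and $C_5$ which the rest of the path cannot re-enter after a chord-shortcut only if we are slightly careful), and it must still traverse all five disjoint sets, hence has $\geq 5$ vertices. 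This is essentially the argument of Gartland et al.~\cite{Gartlandetal2021}, and since the proposition is quoted from there, I would simply invoke it; the sketch above records why it holds.
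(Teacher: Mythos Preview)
The paper does not prove this proposition at all; it simply quotes it from Gartland et al.\ and uses it as a black box. So there is nothing to compare at the level of proof strategy: your decision at the end to ``simply invoke it'' is exactly what the paper does.

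Your sketch of why the statement is true is the standard one and is essentially correct, but one step is stated incorrectly. You write that the five sets $C_1,\dots,C_5$ are ``five pairwise disjoint sets'' and deduce that any path meeting $C_1$ and $C_5$ has at least five vertices. This is not quite right: only \emph{non-consecutive} $C_i,C_j$ are guaranteed to be disjoint; $C_i$ and $C_{i+1}$ may intersect. The clean way to get the bound is the following index argument. For a vertex $u$ on the path, let $I(u)=\{i:u\in C_i\}$; since non-consecutive sets are disjoint, $I(u)$ is a set of one or two consecutive indices. If $uv\in E(G)$ with $u\in C_i$ and $v\in C_j$, then that very edge witnesses that $C_i$ and $C_j$ touch, hence $|i-j|\le 1$; so for adjacent $u,v$ on the path, $\max I(v)\le \min I(u)+1$. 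Starting with $1\in I(v_1)$ (so $\min I(v_1)=1$) and ending with $5\in I(v_k)$, a short computation forces $k\ge 5$. Together with your observation that a shortest $C_1$--$C_5$ path in $G[\bigcup_i C_i]$ is automatically induced, this completes the argument. The earlier part of your sketch (choosing ports and gluing internal shortest paths) is unnecessary once you switch to the global shortest-path argument, and indeed the ``no chords between consecutive blocks'' worry disappears for free there.
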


The following lemma is an immediate consequence of Lemma~\ref{lem:oct-equiv-packing}.
\begin{lemma}\label{lem:oct-equiv-is}
$(G,\w)$ has an induced bipartite subgraph of weight (with respect to the weight function $\w$) $W$ if and only if $G^{\blob}_{\widetilde{\mathcal{C}}}$ has an independent set of weight $W$, with respect to the weight function $\w^{\blob}$.
\end{lemma}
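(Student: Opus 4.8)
The plan is to derive Lemma~\ref{lem:oct-equiv-is} directly from Lemma~\ref{lem:oct-equiv-packing} by observing that independent sets in $G^{\blob}_{\widetilde{\mathcal{C}}}$ are exactly the pairwise non-touching sub-collections of $\widetilde{\mathcal{C}}$, and that the weight function $\w^{\blob}$ was defined precisely so that the weight of an independent set equals the total $\w$-weight of the corresponding sub-collection. So the argument is essentially a translation between two vocabularies.

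Concretely, first I would fix the correspondence $\mathcal{C}' \mapsto I_{\mathcal{C}'} := \{v_C : C \in \mathcal{C}'\}$ between sub-collections $\mathcal{C}' \subseteq \widetilde{\mathcal{C}}$ and vertex subsets of $G^{\blob}_{\widetilde{\mathcal{C}}}$; this is a bijection since the vertices of $G^{\blob}_{\widetilde{\mathcal{C}}}$ are in bijection with the members of $\widetilde{\mathcal{C}}$. Then I would note that, by the definition of the edge set of $G^{\blob}_{\widetilde{\mathcal{C}}}$ (an edge $v_C v_{C'}$ exists iff $C$ and $C'$ touch), the set $I_{\mathcal{C}'}$ is independent in $G^{\blob}_{\widetilde{\mathcal{C}}}$ if and only if no two sets of $\mathcal{C}'$ touch each other. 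Next, by the definition of $\w^{\blob}$, we have $\w^{\blob}(I_{\mathcal{C}'}) = \sum_{C \in \mathcal{C}'} \w^{\blob}(v_C) = \sum_{C \in \mathcal{C}'} \sum_{u \in C} \w(u) = \sum_{C \in \mathcal{C}'} \w(C)$, so the weight is preserved under the correspondence. Combining these two observations, $G^{\blob}_{\widetilde{\mathcal{C}}}$ has an independent set of $\w^{\blob}$-weight $W$ if and only if there is a pairwise non-touching $\mathcal{C}' \subseteq \widetilde{\mathcal{C}}$ with $\sum_{C \in \mathcal{C}'} \w(C) = W$, which by Lemma~\ref{lem:oct-equiv-packing} holds if and only if $(G,\w)$ has an induced bipartite subgraph of weight $W$.

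There is essentially no obstacle here; the only mild subtlety worth a sentence is that one must take $\w^{\blob}$ to be $\w$-weight and not worry about multiplicities — the members of $\widetilde{\mathcal{C}}$ are vertex sets, and two distinct components arising from different sets in $\mathcal{C}$ but having the same vertex set collapse to a single element of $\widetilde{\mathcal{C}}$, hence a single vertex of $G^{\blob}_{\widetilde{\mathcal{C}}}$; this is consistent with Lemma~\ref{lem:oct-equiv-packing}, which already quantifies over $\mathcal{C}' \subseteq \widetilde{\mathcal{C}}$, so nothing changes. Thus the proof is just: "Immediate from Lemma~\ref{lem:oct-equiv-packing} together with the definitions of $G^{\blob}_{\widetilde{\mathcal{C}}}$ and $\w^{\blob}$, using that independent sets of $G^{\blob}_{\widetilde{\mathcal{C}}}$ correspond bijectively and weight-preservingly to pairwise non-touching sub-collections of $\widetilde{\mathcal{C}}$."

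Finally, although it is not part of this particular lemma, the intended use is to chain Lemma~\ref{lem:oct-equiv-is} with Proposition~\ref{prop:blob} (so that $G^{\blob}_{\widetilde{\mathcal{C}}}$ is $P_5$-free) and Proposition~\ref{prop:is} (polynomial-time maximum weight independent set on $P_5$-free graphs), noting that $\w^{\blob}$ takes positive values so the nonnegativity hypothesis of Proposition~\ref{prop:is} is met, and that $G^{\blob}_{\widetilde{\mathcal{C}}}$ has $\OO(n^6)$ vertices and can be built in polynomial time from $\mathcal{C}$; this yields Theorem~\ref{thm:oct}. But for the statement at hand, the two-line translation argument above suffices.
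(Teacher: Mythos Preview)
Your proposal is correct and matches the paper's approach: the paper simply states that the lemma is an immediate consequence of Lemma~\ref{lem:oct-equiv-packing}, and you have spelled out exactly the bijective, weight-preserving correspondence between independent sets of $G^{\blob}_{\widetilde{\mathcal{C}}}$ and pairwise non-touching sub-collections of $\widetilde{\mathcal{C}}$ that makes this immediate. One tangential inaccuracy in your closing remark: the paper bounds $|V(G^{\blob}_{\widetilde{\mathcal{C}}})|$ by $\OO(n^7)$ rather than $\OO(n^6)$, since $\widetilde{\mathcal{C}}$ contains the connected components of the sets in $\mathcal{C}$ and can thus be larger than $\mathcal{C}$.
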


From Proposition~\ref{prop:blob},  if $G$ is $P_5$-free, then $G^{\blob}_{\widetilde{\mathcal{C}}}$ is also $P_5$-free. 
So by Lemma~\ref{lem:oct-equiv-is}, the problem actually reduces to finding a maximum weight independent set in a $P_5$-free graph, which can be done by~\cite{DBLP:conf/soda/LokshantovVV14}.
Now we are ready to prove our main result, i.e., Theorem~\ref{thm:oct}.

\begin{proof}[Proof of Theorem~\ref{thm:oct}]
Let $(G,\w)$ be an instance of \OCT. Construct an instance $(G^{\blob}_{\widetilde{\mathcal{C}}},\w^{\blob})$ as described earlier.
Note that the number of vertices of $G^{\blob}_{\widetilde{\mathcal{C}}}$ is $|\widetilde{\mathcal{C}}|$, 
which is at most $|\mathcal{C}|$ times the maximum number of connected components of any subgraph induced by an element of $\mathcal{C}$. 
Since $|\mathcal{C}| =\OO(n^6)$ by Lemma~\ref{lem:main}, the number of vertices of $G^{\blob}_{\widetilde{\mathcal{C}}}$ is $\OO(n^7)$. 
Also the construction of this graph takes time polynomial in $n$.

Thus, by Lemma~\ref{lem:oct-equiv-is}, the problem reduces to finding a maximum-weight independent set in a $P_5$-free graph $G^{\blob}_{\widetilde{\mathcal{C}}}$. A maximum-weight independent set on $P_5$-free graphs can be found in polynomial time using the algorithm of~\cite{DBLP:conf/soda/LokshantovVV14} (Proposition~\ref{prop:is}). This concludes the proof of Theorem~\ref{thm:oct}.
\end{proof}

\section{Conclusion}\label{sec:conclusion}
We gave a polynomial-time algorithm for \OCT{} on $P_5$-free graphs. Several interesting problems in this direction remain open.

\begin{enumerate}
    \item The algorithms for {\sc Independent Set} on $P_t$-free graphs~\cite{GartlandLokshtanov2020,DBLP:conf/sosa/PilipczukPR21} also work for counting the number of independent sets of a given size within the same time bound. Because of the ``greedy choice'' implicit in the statement of Lemma~\ref{lem:main}, our algorithm does not work for counting the number of induced bipartite subgraphs of a given size. Does a polynomial (or quasi-polynomial) time algorithm exist for this problem in $P_5$-free graphs?

    \item For every fixed positive integer $k$  we can determine whether $G$ is $k$-colorable in polynomial time on $P_5$-free graphs~\cite{DBLP:journals/algorithmica/HoangKLSS10}. Therefore, in light of Theorem~\ref{thm:oct} it makes sense to ask whether for every positive integer $k$ there exists a polynomial- or quasi-polynomial-time algorithm that takes as input a graph $G$ and outputs a maximum-size (or maximum-weight) set $S$ such that $G[S]$ is $k$-colorable.
    The work of Chudnovsky et al.~\cite{Chudnovskyetal2021} provides a subexponential-time algorithm for this problem (and, actually, its further generalizations).

    \item Our result completes the classification of all {\em connected} graphs $H$ into ones such that \OCT{} on $H$-free graphs is polynomial time solvable or \textsf{NP}-complete. Such a classification for all graphs $H$ (connected or not) remains open. Even more generally, one could hope for a complete classification of the complexity of \OCT{} on ${\cal F}$-free graphs for every \emph{finite} set ${\cal F}$.
\end{enumerate}

\bibliographystyle{alpha}
\bibliography{references}

\newcommand{\etalchar}[1]{$^{#1}$}
\begin{thebibliography}{KMRS20}

\bibitem[AI16]{DBLP:journals/tcs/AkibaI16}
Takuya Akiba and Yoichi Iwata.
\newblock Branch-and-reduce exponential/fpt algorithms in practice: {A} case
  study of vertex cover.
\newblock {\em Theor. Comput. Sci.}, 609:211--225, 2016.

\bibitem[BDF{\etalchar{+}}19]{BonamyDFJP19}
Marthe Bonamy, Konrad~K. Dabrowski, Carl Feghali, Matthew Johnson, and
  Dani{\"{e}}l Paulusma.
\newblock Independent feedback vertex set for p\({}_{\mbox{5}}\)-free graphs.
\newblock {\em Algorithmica}, 81(4):1342--1369, 2019.

\bibitem[BK85]{BrandstadtKratsch1985}
Andreas Brandst{\"a}dt and Dieter Kratsch.
\newblock On the restriction of some {NP}-complete graph problems to
  permutation graphs.
\newblock In Lothar Budach, editor, {\em Fundamentals of Computation Theory},
  pages 53--62, Berlin, Heidelberg, 1985. Springer Berlin Heidelberg.

\bibitem[BS{\etalchar{+}}99]{brandstadt1999graph}
Andreas Brandst{\"a}dt, Jeremy~P Spinrad, et~al.
\newblock {\em Graph classes: a survey}.
\newblock Number~3. Siam, 1999.

\bibitem[BT90]{bacso1990dominating}
G{\'a}bor Bacs{\'o} and Zs~Tuza.
\newblock Dominating cliques in ${P}_5$-free graphs.
\newblock {\em Periodica Mathematica Hungarica}, 21(4):303--308, 1990.

\bibitem[BXTV13]{Buixuan2013}
Binh-Minh Bui-Xuan, Jan~Arne Telle, and Martin Vatshelle.
\newblock Fast dynamic programming for locally checkable vertex subset and
  vertex partitioning problems.
\newblock {\em Theoretical Computer Science}, 511:66--76, 2013.
\newblock Exact and Parameterized Computation.

\bibitem[CHJ{\etalchar{+}}18]{Chiarellietal2018}
Nina Chiarelli, Tatiana~R. Hartinger, Matthew Johnson, Martin Milanič, and
  Daniël Paulusma.
\newblock Minimum connected transversals in graphs: New hardness results and
  tractable cases using the price of connectivity.
\newblock {\em Theoretical Computer Science}, 705:75--83, 2018.

\bibitem[CKP{\etalchar{+}}21]{Chudnovskyetal2021}
Maria Chudnovsky, Jason King, Micha\l{} Pilipczuk, Pawe\l{} Rz\k{a}\.ewski, and
  Sophie Spirkl.
\newblock Finding large {$H$}-colorable subgraphs in hereditary graph classes.
\newblock {\em {SIAM} J. Discret. Math.}, 35(4):2357--2386, 2021.

\bibitem[CPS19]{Dagstuhl2019}
Maria Chudnovsky, Daniel Paulusma, and Oliver Schaudt.
\newblock {Graph Colouring: from Structure to Algorithms (Dagstuhl Seminar
  19271)}.
\newblock {\em Dagstuhl Reports}, 9(6):125--142, 2019.

\bibitem[DFJ{\etalchar{+}}20]{Dabrowskietal2020}
Konrad~K. Dabrowski, Carl Feghali, Matthew Johnson, Giacomo Paesani, Daniël
  Paulusma, and Paweł Rzążewski.
\newblock On cycle transversals and their connected variants in the absence of
  a small linear forest.
\newblock {\em Algorithmica}, 82:2841--2866, 2020.

\bibitem[Die12]{diestelBook}
Reinhard Diestel.
\newblock {\em Graph Theory, 4th Edition}, volume 173 of {\em Graduate texts in
  mathematics}.
\newblock Springer, 2012.

\bibitem[DR{\etalchar{+}}16]{GCWebpage}
H.N. De~Ridder et~al.
\newblock Information system on graph classes and their inclusions.
\newblock {\em www.graphclasses.org}, 2016.

\bibitem[FHRV07]{DBLP:journals/mp/FioriniHRV07}
Samuel Fiorini, Nadia Hardy, Bruce~A. Reed, and Adrian Vetta.
\newblock Approximate min-max relations for odd cycles in planar graphs.
\newblock {\em Math. Program.}, 110(1):71--91, 2007.

\bibitem[Gal67]{gallai1967transitiv}
Tibor Gallai.
\newblock Transitiv orientierbare graphen.
\newblock {\em Acta Mathematica Hungarica}, 18(1-2):25--66, 1967.

\bibitem[GL20]{GartlandLokshtanov2020}
Peter Gartland and Daniel Lokshtanov.
\newblock Independent set on $\mathrm{P}_{k}$-free graphs in quasi-polynomial
  time.
\newblock In {\em 2020 {IEEE} 61st Annual Symposium on Foundations of Computer
  Science {(FOCS)}}, pages 613--624, 2020.

\bibitem[GLP{\etalchar{+}}21]{Gartlandetal2021}
Peter Gartland, Daniel Lokshtanov, Marcin Pilipczuk, Micha\l{} Pilipczuk, and
  Pawe\l{} Rz\k{a}\.{z}ewski.
\newblock Finding large induced sparse subgraphs in $\uppercase{C}_{>t}$-free
  graphs in quasipolynomial time.
\newblock In {\em Proceedings of the 53rd Annual ACM SIGACT Symposium on Theory
  of Computing}, STOC 2021, page 330–341. Association for Computing
  Machinery, 2021.

\bibitem[GW98]{DBLP:journals/combinatorica/GoemansW98}
Michel~X. Goemans and David~P. Williamson.
\newblock Primal-dual approximation algorithms for feedback problems in planar
  graphs.
\newblock {\em Comb.}, 18(1):37--59, 1998.

\bibitem[HKL{\etalchar{+}}10]{DBLP:journals/algorithmica/HoangKLSS10}
Ch{\'{\i}}nh~T. Ho{\`{a}}ng, Marcin Kaminski, Vadim~V. Lozin, Joe Sawada, and
  Xiao Shu.
\newblock Deciding \emph{k}-colorability of \emph{P}\({}_{\mbox{5}}\)-free
  graphs in polynomial time.
\newblock {\em Algorithmica}, 57(1):74--81, 2010.

\bibitem[H{\"{u}}f09]{DBLP:journals/jgaa/Huffner09}
Falk H{\"{u}}ffner.
\newblock Algorithm engineering for optimal graph bipartization.
\newblock {\em J. Graph Algorithms Appl.}, 13(2):77--98, 2009.

\bibitem[IOY14]{DBLP:conf/soda/IwataOY14}
Yoichi Iwata, Keigo Oka, and Yuichi Yoshida.
\newblock Linear-time {FPT} algorithms via network flow.
\newblock In Chandra Chekuri, editor, {\em Proceedings of the Twenty-Fifth
  Annual {ACM-SIAM} Symposium on Discrete Algorithms, {SODA} 2014, Portland,
  Oregon, USA, January 5-7, 2014}, pages 1749--1761. {SIAM}, 2014.

\bibitem[JBDP21]{DBLP:conf/iwpec/JacobBDP21}
Hugo Jacob, Thomas Bellitto, Oscar Defrain, and Marcin Pilipczuk.
\newblock Close relatives (of feedback vertex set), revisited.
\newblock In {\em 16th International Symposium on Parameterized and Exact
  Computation, {IPEC} 2021, September 8-10, 2021, Lisbon, Portugal}, volume 214
  of {\em LIPIcs}, pages 21:1--21:15. Schloss Dagstuhl - Leibniz-Zentrum
  f{\"{u}}r Informatik, 2021.

\bibitem[JK11]{DBLP:conf/iwpec/JansenK11}
Bart M.~P. Jansen and Stefan Kratsch.
\newblock On polynomial kernels for structural parameterizations of odd cycle
  transversal.
\newblock In D{\'{a}}niel Marx and Peter Rossmanith, editors, {\em
  Parameterized and Exact Computation - 6th International Symposium, {IPEC}
  2011, Saarbr{\"{u}}cken, Germany, September 6-8, 2011. Revised Selected
  Papers}, volume 7112 of {\em Lecture Notes in Computer Science}, pages
  132--144. Springer, 2011.

\bibitem[KMRS20]{DBLP:journals/jcss/KolayMRS20}
Sudeshna Kolay, Pranabendu Misra, M.~S. Ramanujan, and Saket Saurabh.
\newblock Faster graph bipartization.
\newblock {\em J. Comput. Syst. Sci.}, 109:45--55, 2020.

\bibitem[KR02]{DBLP:journals/tcs/KhotR02}
Subhash Khot and Venkatesh Raman.
\newblock Parameterized complexity of finding subgraphs with hereditary
  properties.
\newblock {\em Theor. Comput. Sci.}, 289(2):997--1008, 2002.

\bibitem[KR10]{DBLP:conf/soda/KawarabayashiR10}
Ken{-}ichi Kawarabayashi and Bruce~A. Reed.
\newblock An (almost) linear time algorithm for odd cyles transversal.
\newblock In Moses Charikar, editor, {\em Proceedings of the Twenty-First
  Annual {ACM-SIAM} Symposium on Discrete Algorithms, {SODA} 2010, Austin,
  Texas, USA, January 17-19, 2010}, pages 365--378. {SIAM}, 2010.

\bibitem[KSS12]{DBLP:journals/siamdm/KralSS12}
Daniel Kr{\'{a}}l', Jean{-}S{\'{e}}bastien Sereni, and Ladislav Stacho.
\newblock Min-max relations for odd cycles in planar graphs.
\newblock {\em {SIAM} J. Discret. Math.}, 26(3):884--895, 2012.

\bibitem[KW14]{DBLP:journals/talg/KratschW14}
Stefan Kratsch and Magnus Wahlstr{\"{o}}m.
\newblock Compression via matroids: {A} randomized polynomial kernel for odd
  cycle transversal.
\newblock {\em {ACM} Trans. Algorithms}, 10(4):20:1--20:15, 2014.

\bibitem[LNR{\etalchar{+}}14]{DBLP:journals/talg/LokshtanovNRRS14}
Daniel Lokshtanov, N.~S. Narayanaswamy, Venkatesh Raman, M.~S. Ramanujan, and
  Saket Saurabh.
\newblock Faster parameterized algorithms using linear programming.
\newblock {\em {ACM} Trans. Algorithms}, 11(2):15:1--15:31, 2014.

\bibitem[LSS09]{DBLP:conf/iwoca/LokshtanovSS09}
Daniel Lokshtanov, Saket Saurabh, and Somnath Sikdar.
\newblock Simpler parameterized algorithm for {OCT}.
\newblock In {\em Combinatorial Algorithms, 20th International Workshop,
  {IWOCA} 2009, Hradec nad Moravic{\'{\i}}, Czech Republic, June 28-July 2,
  2009, Revised Selected Papers}, volume 5874 of {\em Lecture Notes in Computer
  Science}, pages 380--384. Springer, 2009.

\bibitem[LSW12]{DBLP:conf/fsttcs/LokshtanovSW12}
Daniel Lokshtanov, Saket Saurabh, and Magnus Wahlstr{\"{o}}m.
\newblock Subexponential parameterized odd cycle transversal on planar graphs.
\newblock In {\em {IARCS} Annual Conference on Foundations of Software
  Technology and Theoretical Computer Science, {FSTTCS} 2012, December 15-17,
  2012, Hyderabad, India}, volume~18 of {\em LIPIcs}, pages 424--434. Schloss
  Dagstuhl - Leibniz-Zentrum f{\"{u}}r Informatik, 2012.

\bibitem[LVV14]{DBLP:conf/soda/LokshantovVV14}
Daniel Lokshtanov, Martin Vatshelle, and Yngve Villanger.
\newblock Independent set in \emph{P}\({}_{\mbox{5}}\)-free graphs in
  polynomial time.
\newblock In Chandra Chekuri, editor, {\em Proceedings of the Twenty-Fifth
  Annual {ACM-SIAM} Symposium on Discrete Algorithms, {SODA} 2014, Portland,
  Oregon, USA, January 5-7, 2014}, pages 570--581. {SIAM}, 2014.

\bibitem[LY80]{LewisYannakakis}
John~M. Lewis and Mihalis Yannakakis.
\newblock The node-deletion problem for hereditary properties is {NP}-complete.
\newblock {\em Journal of Computer and System Sciences}, 20(2):219--230, 1980.

\bibitem[OR20]{OkrasaRzazewski2020}
Karolina Okrasa and Paweł Rzążewski.
\newblock Subexponential algorithms for variants of the homomorphism problem in
  string graphs.
\newblock {\em Journal of Computer and System Sciences}, 109:126--144, 2020.

\bibitem[PPR21]{DBLP:conf/sosa/PilipczukPR21}
Marcin Pilipczuk, Michał Pilipczuk, and Paweł Rzążewski.
\newblock Quasi-polynomial-time algorithm for independent set in
  \emph{P\({}_{\mbox{t}}\)}-free graphs via shrinking the space of induced
  paths.
\newblock In Hung~Viet Le and Valerie King, editors, {\em 4th Symposium on
  Simplicity in Algorithms, {SOSA} 2021, Virtual Conference, January 11-12,
  2021}, pages 204--209. {SIAM}, 2021.

\bibitem[RSS07]{DBLP:journals/mst/RamanSS07}
Venkatesh Raman, Saket Saurabh, and Somnath Sikdar.
\newblock Efficient exact algorithms through enumerating maximal independent
  sets and other techniques.
\newblock {\em Theory Comput. Syst.}, 41(3):563--587, 2007.

\bibitem[RSV04]{DBLP:journals/orl/ReedSV04}
Bruce~A. Reed, Kaleigh Smith, and Adrian Vetta.
\newblock Finding odd cycle transversals.
\newblock {\em Oper. Res. Lett.}, 32(4):299--301, 2004.

\bibitem[TM18]{DBLP:journals/corr/abs-1807-10277}
Yotaro Takazawa and Shinji Mizuno.
\newblock On a reduction of the weighted induced bipartite subgraph problem to
  the weighted independent set problem.
\newblock {\em CoRR}, abs/1807.10277, 2018.

\end{thebibliography}

\end{document}